\documentclass[onecolumn,prx, superscriptaddress,longbibliography,margin=1mm 11pt]{revtex4-2} 

\usepackage{algorithm}
\usepackage{ragged2e}
\usepackage{amssymb}
\usepackage{graphicx}
\usepackage{bm}
\usepackage{amsmath}
\usepackage{environ}
\usepackage{appendix}
\usepackage{physics}
\usepackage{rotating,booktabs,array,amsmath,amssymb}

\usepackage{algorithmicx,algorithm,setspace,subfigure}
\usepackage{float}

\usepackage{enumitem}

\newcolumntype{C}{>{$\displaystyle}c<{$}} 
\usepackage{caption}
\usepackage{subcaption}

\usepackage[nopar]{lipsum}

\usepackage{enumitem}
\setlist[enumerate,1]{label=(\arabic*),ref=\arabic*}

\makeatletter
\usepackage{amsthm}
\usepackage{amssymb}
\usepackage{amsfonts}
\usepackage{fancyhdr}
\usepackage{slashed}
\usepackage{bbm}
\usepackage{latexsym,epsfig,bbm}
\usepackage[colorlinks=true,urlcolor=blue,citecolor=blue]{hyperref}
\usepackage{todonotes}
\definecolor{blue}{rgb}{0,0.2,1}

\definecolor{red}{rgb}{0.9,0,0}

\newtheorem{theorem}{Theorem}
\newtheorem{lemma}[theorem]{Lemma}
\newtheorem{definition}[theorem]{Definition}

\renewcommand{\mathbf}[1]{\boldsymbol{#1}}

\begin{document}

\title{Fundamental questions on robustness and accuracy for classical and quantum learning algorithms}  \thanks{An invited book chapter (submitted June 2025) in \textit{Quantum Robustness in Artificial Intelligence -- Principles and Applications}, part of the Quantum Science and Technology book series, Springer, ed. Muhammad Usman, 2026}

\author{Nana Liu}
\email{nana.liu@quantumlah.org}
\affiliation{Institute of Natural Sciences, Shanghai Jiao Tong University, Shanghai 200240, China}
\affiliation{School of Mathematical Sciences, Shanghai Jiao Tong University, Shanghai, 200240, China}
\affiliation{Ministry of Education Key Laboratory in Scientific and Engineering Computing, Shanghai Jiao Tong University, Shanghai 200240, China}
\affiliation{Global College, Shanghai Jiao Tong University, Shanghai 200240, China.}

\date{\today}

\begin{abstract}
 This chapter introduces and investigates some fundamental questions on the relationship between accuracy and robustness in both classical and quantum classification algorithms under noisy and adversarial conditions. We introduce and clarify various definitions of robustness and accuracy, including corrupted-instance robustness accuracy and prediction-change robustness, distinguishing them from conventional accuracy and robustness measures. Through theoretical analysis and toy models, we establish conditions under which trade-offs between accuracy and robustness accuracy arise and identify scenarios where such trade-offs can be avoided. The framework developed highlights the nuanced interplay between model bias, noise characteristics, and perturbation types, including relevant and irrelevant perturbations. We explore the implications of some of these results for incompatible noise, adversarial quantum perturbations, the no free lunch theorem, and suggest future methods to examine these problems from the lens of dynamical systems. 
\end{abstract}

\maketitle
\section{Introduction}

Realizing the full potential of quantum-enhanced learning necessitates addressing a fundamental and pervasive challenge: the presence of noise and perturbations in both classical and quantum learning systems. Noise in machine learning arises from multiple sources. On one hand, environmental noise—such as hardware imperfections, decoherence, and stochastic fluctuations—can degrade the fidelity of quantum operations and corrupt input data. On the other hand, adversarial perturbations, often small but intentionally crafted changes to input examples, threaten the reliability and security of learning models by causing misclassifications. While noise is frequently viewed as a detrimental factor limiting performance, it can also play beneficial roles, including regularization to improve generalization, enhancement of privacy, and enabling compression when its parameters are carefully controlled.\\

In the context of quantum machine learning, noise plays a dual role: it can be a disruptive influence that impedes algorithmic performance, but also a tool to enhance learning capabilities when harnessed appropriately. This duality raises essential questions about how to evaluate and quantify the robustness of quantum classifiers under realistic noisy conditions. Conventional accuracy measures, defined in noise-free scenarios, fail to capture the stability of model predictions when inputs or models themselves are subject to perturbations. This necessitates the introduction of robustness-specific performance metrics designed for both adversarial and non-adversarial noise environments. \\

This chapter poses and explores some of these fundamental questions by focusing on classification problems under noisy conditions. We begin by systematically categorizing various notions of robustness, including corrupted-instance robustness accuracy, prediction-change robustness, and error-region robustness. We clarify how these definitions differ conceptually and mathematically, and how they relate to other accuracy measures. We also distinguish between relevant perturbations (which alter the true class of an example), and irrelevant perturbations, which do not change the underlying class but may affect model predictions. From some simple examples, we see the role of relevant and irrelevant perturbations trade-offs that can occur between accuracy and robustness accuracy.\\

The central theme of the chapter is the exploration of the trade-off relations between accuracy and robustness accuracy. Through theoretical analysis and simplified illustrative models, we demonstrate conditions under which increasing robustness inevitably sacrifices accuracy, and conversely, when both can be simultaneously optimized without compromise. Key results include general relationships for unbiased and biased classifiers, showing how robustness accuracy depends jointly on accuracy and model robustness properties.\\

We extend the discussion to quantum classification models, examining how quantum noise channels and measurement operators influence these trade-offs. Notably, we analyze noise models such as bit-flip, phase-flip, and depolarization, identifying scenarios where trade-offs appear or vanish. We also discuss adversarial perturbations modeled as quantum channels with bounded trace distance, establishing sufficient conditions for maximal trade-offs between accuracy and robustness accuracy. The chapter highlights how the intrinsic geometry of the data and the hypothesis class, rather than model-specific features alone, dictate these trade-offs.\\

Beyond these core theoretical developments, we explore important practical and conceptual applications of the accuracy-robustness trade-off framework:
\begin{itemize}
    \item Incompatible Noise: We introduce the notion of incompatible noise, where training a model to be robust against one type of noise can reduce robustness against another. We provide explicit examples involving depolarization and bit-flip noise to illustrate this phenomenon, which has significant implications for model design and training strategies.
\item	Adversarial Examples: We rigorously characterize adversarial perturbations in the quantum domain, specifying the conditions under which adversarial noise remains irrelevant (does not change true class) but induces prediction errors. This provides insight into the vulnerability of quantum classifiers and suggests avenues for robust training methods.
\item	No Free Lunch Theorem: The chapter offers a new perspective on the no free lunch theorem by linking it to robustness accuracy. We interpret the theorem’s assertion—that no universally superior model exists—through the lens of perturbations inducing shifts in data distributions, clarifying the relationship between accuracy on one distribution and robustness accuracy on perturbed distributions.
\end{itemize}
We also outline connections to dynamical systems theory, framing machine learning robustness as a question of stability and convergence under perturbations. We discuss the results and present the open questions in the last section. 

\section{Definition zoo}\label{sec:xref}
In studying the performance of models in the presence of environmental as well as adversarial noise, there are many different characterisations of this performance and the sensitivity to these perturbations. We have various measures of robustness, so it is important to first become clear what these different definitions are, as the literature can often confuse these different definitions and this can even lead to contradictory claims, which are in fact not contradictory if one is careful. 

\begin{table}[]
\caption{Classification of types of perturbations}
\begin{tabular}{|l|l|l|}
\hline
 & Irrelevant perturbation $\mathcal{N}_{irr}$ & Relevant perturbation $\mathcal{N}_{rel}$ \\ \hline
Small $D(\sigma, \mathcal{N}(\sigma))<\epsilon$ & Type I & Type II  \\ \hline
 Unrestricted $D(\sigma, \mathcal{N}(\sigma))$ & Type III & Type IV \\ \hline
\end{tabular}
\end{table}

\begin{table}[]
\caption{Different species of robustness and accuracy}
\begin{tabular}{|l|l|l|l|}
\hline
 & Corrupted-instance & Prediction-change & Error-region \\ \hline
Robust loss & $\tilde{L}$ \quad Definition 6 & $L^*$ \quad Definition 7 & $\bar{L}$ \quad Definition 8 \\ \hline
Robustness accuracy & $\tilde{A}$ \quad Definition 11 & N/A &  $\bar{A}$ \quad Definition 13\\ \hline
Robustness & N/A & $A^*$ \quad Definition 12  & N/A  \\ \hline
Robustness to adversarial perturbations & $\tilde{\Delta}$ \quad Definition 14 & $\Delta^*$ \quad Definition 15 & $\bar{\Delta}$ \quad Definition 16  \\ \hline
\end{tabular}
\end{table}

\subsection{General definitions for classification problems}
Let $\sigma \in \Sigma$ be input states, where the pair $(\sigma, c)$ is sampled i.i.d from some distribution.  The model or hypothesis function $h_{\vec \theta}:\Sigma \rightarrow C$ makes the prediction for the classification of states $\sigma \in \Sigma$ into $K$ distinct classes, where $C=\{1,...,K\}$. We denote the ground truth (true class) of $\sigma$ as $c(\sigma)$. If the model is deterministic, it means that $h$ is a deterministic function. If the model is probabilistic, then $h$ is a probabilistic function, which gives outputs in $C$ with a certain probability.\\

\begin{definition} \textbf{(Loss function)} A \textit{local loss function} $l_h(\sigma, c): (\Sigma, C) \rightarrow \Re$ is a penalty function for the state $\sigma$ when a model $h$ is used. It outputs a small number if $h$ gives the same prediction as $c$ for $\sigma$, and a larger number number otherwise. A \textit{global loss function} is the expected value of the local cost function over the distribution $(\sigma, c) \sim \mathcal{D}$
\begin{align}
    L_{h, \mathcal{D}}=\textbf{E}_{(\sigma, c) \sim \mathcal{D}} l_h(\sigma, c)
\end{align}
If $\mathcal{D}=\mathcal{D}_{train}$ corresponds only to the set of training data, then this global loss is called \textit{empirical loss}. If $\mathcal{D}=\mathcal{D}_{true}$ corresponds to the complete underlying distribution for $\sigma$ states, then this global loss is the \textit{expected loss} (population risk), but in most cases one does not have direct access to $\mathcal{D}_{true}$. 
\end{definition}

\begin{definition} \textbf{(Global loss minimisation)} If one found the model that minimised the empirical loss (training process), one gets the \textit{training loss} $\min_h L_{h, \mathcal{D}_{train}}$. The model obtained from this minimisation we denote $h_{train}$
\begin{align}
    h_{train}=\text{argmin}_h L_{h, \mathcal{D}_{train}}. 
\end{align}

One can also instead minimise the expected loss to get $\min_h L_{h, \mathcal{D}_{true}}$. The model obtained from this minimisation we denote $h_{true}$
\begin{align}
    h_{true}=\text{argmin}_h L_{h, \mathcal{D}_{true}}
\end{align}
where $h$ is optimised over some chosen hypothesis class of functions.
\end{definition}

\begin{definition} \textbf{(Generalisation and approximation error)} The \textit{generalisation error} (estimation error) $G_L$is defined as the difference
\begin{align}
    G_L=L_{h_{train}, \mathcal{D}_{true}}-L_{h_{true}, \mathcal{D}_{true}}.
\end{align}
Therefore, if the training set better approximates the true distribution (e.g. finite data versus infinite data setting), the the generalisation error should approach zero.
Note that while it is not in general possible to access $L_{h_{train}, \mathcal{D}_{true}}$ directly since $\mathcal{D}_{true}$ is generally unknown, it can be approximated by the \textit{testing loss} $L_{h_{train}, \mathcal{D}_{test}}$, where $\mathcal{D}_{test}$ is the distribution corresponding to the testing set. \\

The model minimising loss in classification is the Bayes classifier and we can call this value of loss the \textit{Bayes loss} $L_{h_{Bayes}, \mathcal{D}_{true}}$. Then the \textit{approximation error} $A_L$ refers to the difference 
\begin{align}
    A_L=L_{h_{true}, \mathcal{D}_{true}}-L_{h_{Bayes}, \mathcal{D}_{true}}.
\end{align}
The approximation error therefore captures how well the hypothesis class that the model is assumed to belong captures the true classification model. 
\end{definition}

\begin{definition} \textbf{Definition. (Excess risk)}
To assess how good the model $h_{train}$ is, that is derived from the training process, we are interested then in the \textit{excess risk} $E_L$
\begin{align}
    E_L=L_{h_{train}, \mathcal{D}_{true}}-L_{h_{Bayes}, \mathcal{D}_{true}}=G_L+A_L.
\end{align}
\end{definition}
Note that since $\mathcal{D}_{true}$ is not generally accessible, so the above quantity is approximated by replacing it with a test distribution $\mathcal{D}_{test}$.\\
\subsection{Definitions in presence of perturbations}
In the following, we want to deal with loss functions in the presence of perturbations to the input states $\sigma$. There are many definitions which must be clearly distinguished and can potentially have very different behaviour. \\

In the presence of a perturbation, the true class itself can change, depending on the type of perturbation on the test state. this leads us to the following definitions where we distinguish perturbations into two types.\\

\begin{definition} \textbf{(Relevant and irrelevant perturbations)} A \textit{relevant perturbation} $\mathcal{N}_{rel, \sigma}$ and \textit{irrelevant perturbation} $\mathcal{N}_{irr, \sigma}$ for the state $\sigma$ are changes to the state such that
\begin{align}
    &c(\mathcal{N}_{rel,\sigma}(\sigma)) \neq c(\sigma) \nonumber \\
    &c(\mathcal{N}_{irr,\sigma}(\sigma))=c(\sigma).
\end{align}
\end{definition}
We note that if the perturbation is artificially induced by an adversary, it is likely that there is $\sigma-$dependence in the noise $\mathcal{N}$. On the other hand, if the perturbations were induced as a result of the environment, it is more likely that the form of the perturbations themselves are independent of the state $\sigma$. \\

With respect to a specified perturbation, we can also define robust loss, and we identify three kinds.  
\begin{definition} \textbf{(Corrupted-instance robust loss)} Let the perturbation occur so that $\sigma \rightarrow \tilde{\sigma}$ for each state $\sigma \in \Sigma$. Then we define the \textit{corrupted-instance robust loss} (CI-robust loss) as the global loss $\tilde{L}$
\begin{align}
    \tilde{L} \equiv \textbf{E}_{(\sigma, c) \sim \mathcal{D}} l_h (\tilde{\sigma}, c_{\sigma})
\end{align}
\end{definition}
\begin{definition} \textbf{(Prediction-change robust loss)} Let the perturbation occur so that $\sigma \rightarrow \tilde{\sigma}$ for each state $\sigma \in \Sigma$. Then we define the \textit{prediction-change robust loss} (PC-robust loss) as the global loss $L^*$
\begin{align}
    L^* \equiv \textbf{E}_{(\sigma, c) \sim \mathcal{D}} l_h (\tilde{\sigma}, h(\sigma)),
\end{align}
where we replace the true class $c_{\sigma}$ with the model prediction $h(\sigma)$. 
\end{definition}
\begin{definition} \textbf{(Error-region robust loss)}
We define the \textit{error-region robust loss} (ER-robust loss) as the global loss $\bar{L}$
\begin{align}
    \bar{L} \equiv \textbf{E}_{(\sigma, c) \sim \mathcal{D}} l_h (\tilde{\sigma}, c_{\tilde{\sigma}})
\end{align}
where we also take into account the change to the true class if the perturbation $\sigma \rightarrow \tilde{\sigma}$ occurs. 
\end{definition}
However, since the true class is usually not accessible (at least those not in the training set), it is not convenient to work with the error-region robust loss, and instead it is more convenient to work with the corrupted-instance robust loss.

\subsubsection{$1-0$ loss: accuracy and robustness}\label{sec:10definitions}
Here we can focus on the different definitions of robustness and accuracy when using $1-0$ loss, which is an important loss function for classification problems. Here the local loss $l(\sigma, c)$ is the identity function $\mathbf{1}(h \neq c)$, which takes the value of $1$ if $h(\sigma) \neq c$ and is equal to $0$ otherwise.
\begin{definition}
The \textit{error region} $\mathcal{M}$ is defined as the set of samples $\sigma \in \Sigma$ where $h(\sigma)\neq c(\sigma)$. 
\end{definition}

\begin{definition} We define \textit{accuracy} of the model $h_{\vec \theta}$ as the probability 
\begin{align}
   A_{\vec \theta, \mathcal{D}} \equiv P_{{(\sigma, c) \sim \mathcal{D}}}(h_{\vec \theta} (\sigma)=c(\sigma)).
\end{align}
We will often suppress the subscripts $\{\vec \theta, \mathcal{D}\}$ in the cases where our model and distribution do not change in our analysis. 
\end{definition}

\begin{definition} We define \textit{corrupted-instance robustness accuracy} by the probability 
\begin{align}
     \tilde{A}_{\vec \theta, \mathcal{D}} \equiv P_{{(\sigma, c) \sim \mathcal{D}}}(h_{\vec \theta} (\tilde{\sigma})=c(\sigma))
\end{align}
where $\tilde{h}$ is a model modified by some perturbation $\mathcal{N}$.
\end{definition}

\begin{definition} We define \textit{prediction-change robustness} by the probability 
\begin{align}
     A^*_{\vec \theta, \mathcal{D}} \equiv P_{{(\sigma, c) \sim \mathcal{D}}}(h_{\vec \theta} (\tilde{\sigma})=h(\sigma)). 
     \end{align}
\end{definition}
Although we will not be using the following definition in our analysis later -- since corrupted-instance robustness accuracy and prediction-change robustness are more natural -- for completeness we include also the error-region robustness accuracy. 
\begin{definition} 
We define \textit{error-region robustness accuracy} by the probability 
\begin{align}
     \bar{A}_{\vec \theta, \mathcal{D}} \equiv P_{{(\sigma, c) \sim \mathcal{D}}}(h_{\vec \theta} (\tilde{\sigma})=\tilde{c}(\sigma)). 
\end{align}
\end{definition}
In the rest of the chapter when using the terminology robustness accuracy, we use this to denote corrupted-instance robustness accuracy unless otherwise specified. 
\subsubsection{Measures of robustness with respect to adversarial perturbations}
Note that all the above measures and definitions require a specification of what the perturbation is. However, in adversarial perturbations where by definition we do not necessarily have full information on the perturbation itself, and in addition these are usually 'small perturbations' according to some definition of 'small', it is important to characterise the robustness of a model. In this case, it is more convenient to specify some minimum distance -- again different possible definitions of distance are possible -- to perturb a test state whereby the relationship between the model prediction and the true prediction changes. Just as there are three types of robust loss, here we have three corresponding types of robustness with respect to adversarial perturbations. We note that depending on the definition of distance one chooses, for example $l_1$ or quantum trace distance, we will also obtain different values and can even obtain different behaviour. 
\begin{definition}
    We define \textit{corrupted-instance robustness to adversarial perturbations} by 
\begin{align}
   \tilde{\Delta}(h,\sigma)=\min {|\delta|} \quad, \quad h(\mathcal{N}_{\delta}(\sigma))\neq c(\sigma),
\end{align}
where $|\delta|$ is some measure of distance between states $\sigma$ and $\mathcal{N}_{\delta}(\sigma)$. 
We can also define the corresponding \textit{expected corrupted-instance robustness to adversarial perturbations} by
\begin{align}
   \langle  \tilde{\Delta}(h) \rangle=\textbf{E}_{{(\sigma, c) \sim \mathcal{D}}} \Delta(h,\sigma) 
\end{align}
\end{definition}
\begin{definition}
 We define \textit{prediction-change robustness to adversarial perturbations} by 
\begin{align}
    \Delta^*(h,\sigma)=\min {|\delta|} \quad, \quad h(\mathcal{N}_{\delta}(\sigma))\neq h(\sigma),
\end{align}
where $|\delta|$ is some measure of distance between states $\sigma$ and $\mathcal{N}_{\delta}(\sigma)$.
We can also define the corresponding \textit{expected prediction-change robustness to adversarial perturbations} by
\begin{align}
    \langle \Delta^*(h)\rangle =\textbf{E}_{{(\sigma, c) \sim \mathcal{D}}} \Delta(h,x) 
\end{align}
\end{definition}

\begin{definition}
We define \textit{error-region robustness to adversarial perturbations} by 
\begin{align}
    \bar{\Delta}(h,\sigma)=\min {|\delta|} \quad, \quad h(\mathcal{N}_{\delta}(\sigma))\neq c(\mathcal{N}_{\delta}(\sigma)),
\end{align}
where $|\delta|$ is some measure of distance between states $\sigma$ and $\mathcal{N}_{\delta}(\sigma)$.
We can also define the corresponding \textit{expected error-region robustness to adversarial perturbations} by
\begin{align}
    \langle \bar{\Delta}(h) \rangle= \textbf{E}_{{(\sigma, c) \sim \mathcal{D}}} \Delta(h,x) 
\end{align}
\end{definition}

\section{Trade-offs between accuracy and robustness accuracy}
In this section, we will examine more closely - through simple examples - the relationship between robustness and accuracy, and robustness accuracy from the point of view of classification algorithms at the testing stage. We will use some simplified toy models to illustrate some key observations and to show why it's insufficient to study accuracy of a model on its own. We will see through some simple examples that trade-offs between accuracy and robustness accuracy can exist for both the $1-0$ loss as well as linear loss. From simple examples as a starting point, we can ask fundamental questions about the requirements we should have for our model which could be subject to noisy perturbations and when we do and do not expect trade-offs between accuracy and robustness accuracy. \\

For discussions on trade-offs between accuracy and robustness for classification of classical data (where both accuracy and robustness may be defined in various ways), see for example~\cite{dobriban2023provable, tsipras2018robustness, xu2012robustness, yang2020closer, zhang2019theoretically}.

\subsection{$1-0$ loss}
From our definition zoo in Section~\ref{sec:10definitions} focusing on $1-0$ loss, we see that the main players in characterising how good the classifier are: (i) accuracy $A$, (ii) corrupted-instance robustness accuracy $\tilde{A}$ and (iii) prediction-change robustness $A^*$. A fundamental question is, what is the relationship between these key players? \\

Briefly, the accuracy denotes the probability that the model $h$ makes the correct description when we are sampling test states from the full distribution $\mathcal{D}$. However, accuracy may not be sufficient if the model is not trained under particular types of noise that is expected to corrupt incoming new states. The corrupted-instance robustness accuracy refers to the accuracy of the model when each test state $\sigma$ is perturbed by some perturbation $\mathcal{N}$, and we refer this as \textit{robustness accuracy} for short. Since this quantity itself must be related to properties of the model itself, we must also define prediction-change robustness --- which we refer as \textit{robustness} for short -- which refers to the probability that the model prediction itself does not change when each test state is subject to the same perturbation $\mathcal{N}$. \\

Here it is possible to identify a fundamental relationship between these quantities when considering $K$-class classification. Here we look at the simplest case where both the data and the model are unbiased, and we will make some simple observations. 

\begin{theorem} \textbf{(Unbiased classifier)} \label{theorem1}
\textit{If the data and the model are both unbiased, in the sense that $P(h=i)=1/K=P(c=j)$ for $K$-class classification and for any $i,j=1,...,K$, then the following holds:
\begin{align} \label{eq:lemma1}
     \tilde{A} &= A(2A^*-1)+(1-A^*) \\
     &=A^*(2A-1)+(1-A).
\end{align}}
\end{theorem}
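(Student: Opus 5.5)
The plan is to expand the event $\{h(\tilde\sigma)=c(\sigma)\}$ according to two binary events. The first is whether the clean prediction is correct, $E=\{h(\sigma)=c(\sigma)\}$, with $P(E)=A$. The second is whether the prediction survives the perturbation, $F=\{h(\tilde\sigma)=h(\sigma)\}$, with $P(F)=A^*$. By the law of total probability,
\begin{align}
\tilde A = P(h(\tilde\sigma)=c,\,E,\,F)+P(h(\tilde\sigma)=c,\,E,\,F^c)+P(h(\tilde\sigma)=c,\,E^c,\,F)+P(h(\tilde\sigma)=c,\,E^c,\,F^c).
\end{align}
The middle two terms vanish identically. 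On $E\cap F^c$ we have $h(\tilde\sigma)\neq h(\sigma)=c$. On $E^c\cap F$ we have $h(\tilde\sigma)=h(\sigma)\neq c$. The first term equals $P(E\cap F)$, since on $E\cap F$ we have $h(\tilde\sigma)=h(\sigma)=c$. So the whole task reduces to evaluating $P(E\cap F)$ and $P(h(\tilde\sigma)=c,\,E^c,\,F^c)$ in terms of $A$ and $A^*$.

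The second step uses the unbiasedness hypothesis $P(h=i)=1/K=P(c=j)$ to argue that $E$ and $F$ are independent, so that $P(E\cap F)=AA^*$. The intended route is a symmetry argument. Uniform marginals for both the label and the prediction mean that no class is singled out. Whether the model's answer on $\sigma$ coincides with the label therefore carries no information about whether that answer is stable under $\mathcal{N}$. Concretely, one conditions on the pair $(h(\sigma),c(\sigma))=(i,j)$ and uses invariance under relabelling of classes to show that $P(F\mid h=i,c=j)$ is independent of whether $i=j$. Averaging then gives $P(F\mid E)=P(F\mid E^c)=A^*$.

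The third step handles the remaining term. On $E^c\cap F^c$, the clean prediction is wrong and the perturbation moves it. Here we need that the perturbed prediction lands on the true class. For $K=2$ this is automatic, since the only other class is $c$. For general $K$, I would again appeal to the class-symmetry of the unbiased setting, interpreting ``prediction changes'' as the flip from the incorrect to the correct label. This would give $P(h(\tilde\sigma)=c\mid E^c\cap F^c)=1$. Combined with the independence from the second step, the term equals $(1-A)(1-A^*)$. Then
\begin{align}
\tilde A = AA^*+(1-A)(1-A^*),
\end{align}
and expanding gives both stated forms, $A(2A^*-1)+(1-A^*)$ and $A^*(2A-1)+(1-A)$.

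I expect the main obstacle to be the second and third steps for $K>2$. Unbiasedness of the marginals alone does not obviously force independence of $E$ and $F$. Nor does it force a changed wrong prediction to move onto the true class rather than onto one of the other $K-2$ incorrect classes. My first attempt would make the class-permutation symmetry precise and check whether it pins down these conditional probabilities. If it does not, I would state explicitly the additional symmetry reading of ``unbiased'' that the identity requires: stability independent of correctness, and changed predictions going to the true class.
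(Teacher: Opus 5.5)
Your first step is correct, and for $K=2$ it shows that the claimed identity is \emph{equivalent} to $P(E\cap F)=P(E)P(F)$: on $E^c\cap F^c$ the perturbed prediction must equal $c$, so $\tilde A=P(E\cap F)+P(E^c\cap F^c)=1-A-A^*+2P(E\cap F)$. The whole theorem therefore sits in your second step, and the symmetry argument you propose for it cannot work. A relabelling of classes sends diagonal pairs $(i,i)$ to diagonal pairs and off-diagonal pairs to off-diagonal pairs, so no class-permutation symmetry can equate $P(F\mid E)$ with $P(F\mid E^c)$. Uniform marginals do not imply independence either. Take $K=2$ and let $(c(\sigma),h(\sigma),h(\tilde\sigma))$ be $(1,1,1)$, $(2,2,2)$, $(1,2,1)$, $(2,1,2)$ with probability $1/4$ each; this is a perturbation that moves exactly the misclassified states to the correct side. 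Every marginal is uniform and the joint law is even invariant under $1\leftrightarrow 2$, yet $F=E$, $A=A^*=1/2$ and $\tilde A=1\neq 1/2$. By the equivalence above, your fallback of postulating ``stability independent of correctness'' amounts to postulating the conclusion.

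The missing ingredient is an assumption the paper's proof in Appendix A uses tacitly. Write $T=c(\sigma)$, $C=h(\sigma)$, $\tilde C=h(\tilde\sigma)$. The paper expands $P(\tilde C=j\mid T=j)$ using $P_{jk}=P(\tilde C=j\mid C=k)$ in place of $P(\tilde C=j\mid C=k,T=j)$. In other words, it assumes that, given the clean prediction, the perturbed prediction does not depend on the true label. Under that assumption your second step goes through:
\begin{itemize}
\item $P(E\cap F)=\sum_j P(T=j,C=j)P_{jj}$;
\item for $K=2$, uniform marginals force $P(T=1,C=1)=P(T=2,C=2)=A/2$;
\item model unbiasedness gives $\sum_j P_{jj}=\sum_j P(\tilde C=C=j)/P(C=j)=KA^*$.
\end{itemize}
Hence $P(E\cap F)=AA^*$, and your third step is then automatic. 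Model unbiasedness enters by converting $\sum_jP_{jj}$ into $A^*$, not as a symmetry that forces independence.

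Your worry about $K>2$ is justified, and the paper does not resolve it either. Its replacement of each weight $P(C=k\mid T=j)$, $k\neq j$, by $P(C\neq T)=1-A$ is exact only for $K=2$. For a counterexample, take $K=3$, $T$ uniform, $C=T+1$ and $\tilde C=T+2$ modulo $3$. All marginals are uniform and the conditional-independence assumption holds trivially, but $A=A^*=\tilde A=0$ while the formula gives $\tilde A=1$. So for $K>2$ an extra hypothesis such as $P(h(\tilde\sigma)=c\mid E^c\cap F^c)=1$ must be assumed, or the statement restricted to $K=2$; it cannot be derived.
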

\begin{proof}
    See Appendix A for the proof. 
\end{proof}
From this simple relationship, we can already make some important observations:
\begin{enumerate}
    \item It is insufficient to only have a high accuracy $A$. The aim for a good model is to have high robustness accuracy $\tilde{A}$ -- which depends just as much on robustness $A^*$ as on accuracy $A$. Here $A$ and $A^*$ are interchangeable in terms of their effect on $\tilde{A}$. 
    \item Trade-offs between  $\tilde{A}$ and $A$ are possible. For example, if model has low robustness $A^*<1/2$, higher accuracy leads to lower robustness accuracy, so training more won't help if the model itself is badly chosen. So the choice of the model itself and having in-built invariances to obtain high $A^*$ is very important. Similarly, if accuracy is low $A<1/2$, increasing robustness alone cannot achieve high robustness accuracy: in fact increasing robustness decreases robustness accuracy. 
    \item Theorem 1 can be rewritten $|\tilde{A}-A|=|(1-2A)(1-A^*)|$, where it can be seen that the difference between accuracy and robustness accuracy depends independently on $A$ and $A^*$. 
\end{enumerate}
In the above, it is clear that how well the model $h$ is chosen and its own robustness properties is vital to the relationship between accuracy and robustness accuracy. We note that this relationship can also be modified from the above simplest form when different assumptions are included. A simple example we can show explicitly is when the model itself is biased. 

\begin{theorem} \textbf{(Biased classifier))} \label{theorem2}
\textit{If the data is unbiased, in the sense that $1/K=P(c=j)$ for $K$-class classification, but the model is biased. We look at the simplest setting where only one class is biased. Suppose it is class $a$ that is biased, so then $P(h=a)=\alpha$, and the rest of the classes are unbiased, so $P(h=i \neq a)=(1-\alpha)/(K-1)$ for any $i \neq a$. Then the following holds:
\begin{align} \label{eq:lemma2}
  & \tilde{A}=\frac{(1-2A)(K-1)}{(1-\alpha)K} 
\left(A^*+\left(\frac{1-\alpha}{\alpha (K-1)}-1\right)P(\tilde{h}=h=a)\right)    +(1-A)
\end{align}}
\end{theorem}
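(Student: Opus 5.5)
The plan is to redo the argument behind Theorem~\ref{theorem1} (Appendix A) class by class, and to keep track of the weights that the unbiasedness assumption made equal to one. I start from the decomposition
\begin{align}
\tilde{A}=P(\tilde{h}=c)=\sum_{i=1}^K\Big[P(\tilde{h}=h=i,\,c=i)+P(\tilde{h}=i,\,h\neq i,\,c=i)\Big],
\end{align}
where $\tilde{h}\equiv h(\tilde{\sigma})$. I then impose the same structural assumptions used for Theorem~\ref{theorem1}: conditional on the value of $h$, whether the perturbation preserves the prediction is independent of whether that prediction was correct, and the misclassified and prediction-changed mass is redistributed as in the unbiased case.

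Under these assumptions each class-$i$ term factorises into a prediction-change piece $P(\tilde{h}=h=i)$ times a correctness piece. The correctness piece is expressed through $A$ and the ratio of marginals
\begin{align}
w_i=\frac{P(c=i)}{P(h=i)}.
\end{align}
This ratio appears because converting $P(c=i\mid h=i)$ into a quantity governed by the global accuracy requires Bayes' rule with the two marginals. In the unbiased setting $w_i=1$ for all $i$, the weighted sum collapses to $A^*=\sum_i P(\tilde{h}=h=i)$, and one recovers Eq.~(\ref{eq:lemma1}).

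In the biased setting I substitute $P(c=i)=1/K$ for every $i$, $P(h=a)=\alpha$, and $P(h=i)=(1-\alpha)/(K-1)$ for $i\neq a$. This gives
\begin{align}
w_a=\frac{1}{\alpha K},\qquad w_{i\neq a}=\frac{K-1}{(1-\alpha)K}.
\end{align}
Splitting the sum into $i=a$ and $i\neq a$, and using $\sum_{i\neq a}P(\tilde{h}=h=i)=A^*-P(\tilde{h}=h=a)$, the weighted sum becomes
\begin{align}
\frac{K-1}{(1-\alpha)K}\Big(A^*+\Big(\frac{1-\alpha}{\alpha(K-1)}-1\Big)P(\tilde{h}=h=a)\Big),
\end{align}
which is exactly the bracketed factor in Eq.~(\ref{eq:lemma2}). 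The remaining contributions, from $h\neq c$ and from prediction changes, should assemble into the prefactor $(1-2A)$ and the additive $(1-A)$, by the same bookkeeping as in Appendix A.

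The main obstacle is this last bookkeeping step. I must check that the off-diagonal terms $P(\tilde{h}=i,\,h\neq i,\,c=i)$ reduce to something proportional to the same weighted sum. Otherwise they would leave a separate, $\alpha$-dependent remainder that does not fit the form of Eq.~(\ref{eq:lemma2}).

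I would also verify the overall sign of the coefficient multiplying the weighted sum by specialising to $\alpha=1/K$. In that case the bracket reduces to $A^*$, so the result must agree with Theorem~\ref{theorem1}. However, Theorem~\ref{theorem1} carries $(2A-1)$ where Eq.~(\ref{eq:lemma2}) carries $(1-2A)$, so the two statements appear to differ in sign. If this check fails, I would go back to the independence assumptions of Appendix A and identify which of them fixes the sign convention, before completing the proof.
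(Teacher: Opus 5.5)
Your bracket computation is right, and your weighted sum is simply $W\equiv\sum_i w_iP(\tilde h=h=i)=\frac1K\sum_j P_{jj}$, where $P_{jk}\equiv P(\tilde h=j\mid h=k)$, because $w_iP(\tilde h=h=i)=\frac1K P(\tilde h=h=i)/P(h=i)$. This is the paper's route. Appendix B imports from Appendix A an identity of the form $\tilde A=\frac{\kappa}{K}\sum_jP_{jj}+(1-A)$, with $\kappa$ linear in $A$, whose derivation uses only $P(c=j)=1/K$. The model bias then enters only through $\sum_jP_{jj}=\frac{1}{\alpha'}\bigl(A^*+(\alpha'/\alpha-1)P(\tilde h=h=a)\bigr)$ with $\alpha'=(1-\alpha)/(K-1)$.

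What you leave open is the step that carries the argument: the off-diagonal terms. If $\tilde h$ is conditionally independent of $c$ given $h$, then $P(\tilde h=i,\,h\neq i,\,c=i)=\frac1K\sum_{k\neq i}P_{ik}P(h=k\mid c=i)$. Appendix A replaces each $P(h=k\mid c=i)$, $k\neq i$, by $1-A$. It then uses the column normalisation $\sum_{i\neq k}P_{ik}=1-P_{kk}$, which turns the off-diagonal total into $(1-A)(1-W)$, so no $\alpha$-dependent remainder survives.

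That replacement must be stated as an extra hypothesis. It does not follow from the theorem's assumptions, and for $K\geq 3$, $A<1$ it even conflicts with $\sum_kP(h=k\mid c=i)=1$. Your class-by-class factorisation has a similar problem: it needs $P(h=i\mid c=i)=A$ for every $i$. For $K=2$ this forces $P(h=a)=\frac12$, contradicting the bias hypothesis. For $K=2$ you need neither assumption. Write $b$ for the other class and $r_i=P(h=i\mid c=i)$, and group terms by $P_{kk}$ after column normalisation instead of factorising per class. The four terms then combine exactly into $\frac12(r_a+r_b-1)(P_{aa}+P_{bb})+(1-A)=(2A-1)W+(1-A)$.

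Your sign check does fail, and no choice of independence assumptions will repair it. The bookkeeping gives $AW+(1-A)(1-W)=(2A-1)W+(1-A)$. Appendix A's intermediate line writes $\frac1K(1-2A)\sum_jP_{jj}$ by an algebra slip; Theorem~\ref{theorem1} itself carries the correct $2A-1$. Appendix B inherits the slip, so its claimed reduction to Theorem~\ref{theorem1} at $\alpha=1/K$ is false.

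In fact Eq.~\eqref{eq:lemma2} as written fails for the trivial perturbation $\tilde h=h$. There $A^*=1$ and $P(\tilde h=h=a)=\alpha$, so the prefactor times the bracket equals $1$. The formula then returns $\tilde A=2-3A$ instead of $A$, which is negative once $A>2/3$. What your plan can establish is Eq.~\eqref{eq:lemma2} with $(1-2A)$ replaced by $(2A-1)$. This holds exactly for $K=2$, and for general $K$ under the extra hypothesis above.
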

\begin{proof}
    See Appendix B for the proof. 
\end{proof}
These simple observations show the importance of distinguishing robustness and robustness accuracy and accuracy as three independent considerations that must be studied in their own right, as well as their dependencies upon each other. The fact that trade-offs between these quantities can exist also points out applications of this to investigating whether it is possible to train models under 'incompatible noises', where more training under one class of noise can actually adversely affect robustness with respect to another type of noise. \\

Another application could be to the no free lunch theorem, which is based on the observation that for every distribution for which a model performs well, there should be some distribution for which the same model would perform badly. Although the free lunch theorem guarantees that these 'dual distributions' should exist, what is the relationship between these distributions and how do we identify them? These questions are also intimately connected with the trade-off relationships between accuracy, robustness accuracy and robustness that we see here. 

\subsubsection{Examples of trade-off between robustness accuracy and accuracy}
In this section, we provide some examples both in classical and quantum models where trade-offs between accuracy and robustness accuracy can exist. By this, we mean we are provided with two different models $h_1$ and $h_2$ and the test states are selected from an identical distribution $\mathcal{D}$. They have corresponding accuracies $A_1$, $A_2$, robustness accuracies $\tilde{A}_1$, $\tilde{A}_2$ and robustness values $A^*_1$, $A^*_2$. A trade-off between robustness accuracy and accuracy is said to exist if when $A_1<A_2$ we simultaneously have $\tilde{A}_1 < \tilde{A}_2$. See Appendix C for more details on the individual examples. Example 1 below is from \cite{tsipras2018robustness}. \\ 

\noindent \textbf{Example 1.}
    Suppose we have the following two models for classical data $x=(x_1,...,x_{d+1})$ that is selected from a distribution $D$, 
\begin{align}
    & H_1(x)=\frac{1}{d}(x_2+...+x_{d+1})\nonumber \\
    & H_2(x)=x_1.
\end{align}
This model is related to the true class in the following way. Let $h(x)=\text{sign}(H(x))$. Then $c(x)=h_2(x)$ with probability $p$ (i.e. proportion $p$ of states selected from $x \sim D$ satisfy $c(x)=h_2(x)$). Also if $c(x)=\pm 1$, then $(x_2,...,x_{d+1}) \sim D_{\pm}$. Suppose we have a noise model where we have the perturbation $x_2,...,x_{d+1} \rightarrow x'_2,...,x'_{d+1}$, where if  $(x_2,...,x_{d+1}) \sim D_{\pm}$ then $(x'_2,...,x'_{d+1}) \sim D_{\mp}$, and $x_1' \rightarrow x_1$. \\

Then when $p>0.5$, it is possible to have the tradeoff $A_1 \gg A_2$ and $\tilde{A}_1 \ll \tilde{A}_2$, so a more accurate model $H_1$ can lead to worse robustness accuracy. \\

\noindent \textbf{Example 2.}
    We can generalise Example 1 to a more general example. Suppose we have two models 
\begin{align}
    & H_1(x)=f(x_1, x_2,...,x_{d+1}) \nonumber \\
    & H_2(x)=g(x_1),
\end{align}
where the perturbation is only on $x_2,...,x_{d+1}$, leaving $x_1$ invariant, and they are selected from one of two distributions $D_{\pm}$ coinciding with $c(x)=\pm$. The effect of the perturbation is to change $D_{\pm} \rightarrow D_{\mp}$. Here again we have $A_2=p=\tilde{A}_2$. Then if $A_1>A_2=p$, it is always true that $\tilde{A}_1<\tilde{A}_2$ if $p>1/2$. \\ 

\noindent \textbf{Example 3.}
    Suppose we have two quantum models $H_1$ and $H_2$ and a Hamiltonian $h\otimes \mathbf{1}^{\otimes d}+\mathbf{1}\otimes h \otimes \mathbf{1}^{\otimes d-1}+...+\mathbf{1}^{\otimes d} \otimes h$ with local operator $h$. Let the input state be of the form $\sigma=\sigma_1 \otimes...\otimes \sigma_{d+1}$ where $\sigma_{i=1,...,d+1}$ are local states that $h$ acts on and 
    \begin{align}
   & H_1(\sigma)=\frac{1}{d}\sum_{i=1}^{d} \text{Tr}(h \sigma_i) \nonumber \\
   & H_2(\sigma)=\text{Tr}(h \sigma_{d+1})
\end{align}
and each $-1 \leq \text{Tr}(h \sigma_i) \leq 1$. 
We can define the distribution from which we select $\sigma$ as $D$. These models are related to the true class in the following way. Let $h(\sigma)=\text{sign}(H(\sigma))$ such that $c(\sigma)=h_2(\sigma)$ with probability $p$. Also if $c(\sigma)=\pm 1$, then the values $\text{Tr}(h \sigma_i) \vert_{i=1,...,d} \sim D_{\pm}$ \\

Now we choose a noise model such that the effect of the perturbation is $\text{Tr}(h \sigma_i) \vert_{i=1,...,d} \rightarrow \text{Tr}(h \sigma'_i) \vert_{i=1,...,d} \sim D_{\mp}$. \\

Then it is possible to find a value $p>0.5$ where we have the trade-off situation that higher accuracy $A_1>A_2$ implies less robustness accuracy $\tilde{A}_1<\tilde{A}_2$.\\

\noindent \textbf{Example 4.}
    We can look at more general models of the following form. We first define 
    \begin{align}
        f_i(\sigma)=\text{Tr}(\mathcal{O}_i \sigma)
    \end{align}
as the $i^{\text{th}}$ \textit{quantum feature with respect to $\mathcal{O}$} of the state $\sigma$, where $\{\mathcal{O}_i\}_{i=1}^{d+1}$ is a set of hermitian operators. Suppose the elements $\alpha_i=0,1$ for simplicity. Then we can compare two models 
\begin{align}
    & H_1(\sigma)=f_1(\sigma) \nonumber \\
    & H_2 (\sigma)=\sum_{i=1}^{d+1} \alpha_i f_i(\sigma)
\end{align}
and we are selecting states $\sigma$ in a way such that $f_{1}(\sigma)=\pm 1$ and $f_{2\leq i \leq d+1}(\sigma)$ has two distributions $D_{\pm}$ corresponding to $c(\sigma)=\pm$. Suppose we choose perturbations where only $f_2,...,f_{d+1}$ can change and leave $f_1$ invariant. Furthermore, the distributions change according to $D_{\pm} \rightarrow D_{\mp}$. We then define the accuracy $A_1$ when $P(c=1)=1/2=P(c=-1)$ to be $A_1=P(h_1=c) \equiv p$. A similar argument can be followed where when $p>0.5$, one can achieve $A_2>A_1$ while $\tilde{A}_2<\tilde{A}_1$. \\

The above examples 1-4 focuses on deterministic classifiers. We can also examine examples with probabilistic classifiers. In the case of a probabilistic classifier, the predicted label $h_s(\sigma)$ is a stochastic variable instead of a deterministic one given $\sigma$. For instance, we can consider this label as the outcome of a single-shot quantum measurement. We focus here on binary classifiers which can be straightforwardly generalised to the multiclass scenario. \\

However, the true label $c$ is still deterministic. This means we are sampling the non-stochastic pair $\{\sigma_i, c_i\} \sim \mathcal{D}$ from some underlying distribution $\mathcal{D}$, where $i$ can also range over continuous values. For simplicity of notation, we consider discrete $i$ and use the summation sign instead of integrals. \\

Below we will distinguish between two types of probabilities. $P_s$ is the probability associated with the stochastic variable getting a specific outcome given some $\sigma$, which we can model with a quantum expectation value as $P_s(h_s(\sigma)=j)=\text{Tr}(\Pi_j \sigma)$, where $\{\Pi_j\}$ are quantum measurement operators, e.g. POVMs. On the other hand $P$ is the probability associated with averaging over some distribution of input states, for instance $P(c(\sigma)=1)$ is the probability of states sampled from $\mathcal{D}$ as having the true label $1$. We can define then accuracy $A_s$ of this stochastic classifier $h_s$ as the probability it gives the same outcome as the true label
\begin{align}
    A_s \equiv \sum_{(\sigma_i, c_i)}P(\sigma_i, c_i)P_s(h_s(\sigma_i)=c_i). 
\end{align}
Then we the following lemma for an unbiased binary stochastic classifier. Here we use the definition $\sigma^{\pm}=\{\sigma| c(\sigma)=\pm 1\}$ and we can define the normalised quantum states 
\begin{align}
     \Sigma^{\pm} \equiv \frac{\sum_{\sigma_i^{\pm}}P(\sigma^{\pm}_i)\sigma_i^{\pm}}{\sum_{\sigma_j^{\pm}}P(\sigma^{\pm}_j)}. 
\end{align}
\begin{lemma}
    We assume an unbiased dataset $\sum_{\sigma_j^{\pm}}P(\sigma^{\pm}_j)=1/2$. We assume a noise model that modifies the measurement $\Pi_{1, -1} \rightarrow \tilde{\Pi}_{1, -1}=\sum_l E_l \Pi_{1, -1} E_l^{\dagger}$ where $\{E_l\}$ are Kraus operators and $\Pi_1=\mathbf{1}-\Pi_{-1}$. Then the difference between the robustness accuracy and accuracy is given by 
  \begin{align} \label{eq:asdifference1}
    |\tilde{A}_s-A_s|=\frac{1}{2}|\text{Tr}(\Pi_1(\tilde{\Sigma}^+-\Sigma^++\Sigma^--\tilde{\Sigma}^-)|.
\end{align}
We can also provide an upper bound to this difference
\begin{align}
    |\tilde{A}_s-A_s|\leq \tau(\Sigma_A, \Sigma_B)
\end{align}
where $\tau(\Sigma_A, \Sigma_B)$ is the trace distance between states $\Sigma_A \equiv (1/2)(\tilde{\Sigma}^++\Sigma^-)$ and $\Sigma_B \equiv (1/2)(\Sigma^++\tilde{\Sigma}^-)$. 
\end{lemma}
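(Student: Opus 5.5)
The plan is to write both $A_s$ and $\tilde{A}_s$ as linear functionals of the averaged class states $\Sigma^{\pm}$, then subtract.

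\textbf{Step 1: expand the accuracy.} Using the unbiasedness assumption $\sum_j P(\sigma_j^{\pm})=1/2$ and $P_s(h_s(\sigma)=j)=\text{Tr}(\Pi_j\sigma)$, I split the sum over $(\sigma_i,c_i)$ into the $c=+1$ and $c=-1$ parts. Linearity of the trace then gives
\begin{align}
A_s=\tfrac{1}{2}\text{Tr}(\Pi_1\Sigma^+)+\tfrac{1}{2}\text{Tr}(\Pi_{-1}\Sigma^-)=\tfrac{1}{2}+\tfrac{1}{2}\text{Tr}\big(\Pi_1(\Sigma^+-\Sigma^-)\big),
\end{align}
where the second equality uses $\Pi_{-1}=\mathbf{1}-\Pi_1$ and $\text{Tr}\,\Sigma^-=1$.

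\textbf{Step 2: handle the noisy measurement.} For $\tilde{A}_s$ the measurement is replaced by $\tilde{\Pi}_j=\sum_l E_l\Pi_jE_l^\dagger$. I move the channel onto the states by cyclicity:
\begin{align}
\text{Tr}(\tilde{\Pi}_j\sigma)=\text{Tr}\big(\Pi_j\textstyle\sum_l E_l^\dagger\sigma E_l\big).
\end{align}
I then define $\tilde{\Sigma}^{\pm}$ as the image of $\Sigma^\pm$ under the adjoint map $\sigma\mapsto\sum_lE_l^\dagger\sigma E_l$. This is the reading under which the statement's $\tilde\Sigma^\pm$ make sense.

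The one point needing care is normalisation. To reuse the simplification $\text{Tr}(\Pi_{-1}\tilde\Sigma^-)=1-\text{Tr}(\Pi_1\tilde\Sigma^-)$, I need $\text{Tr}\,\tilde{\Sigma}^-=1$, i.e.\ $\sum_l E_lE_l^\dagger=\mathbf{1}$, a unital channel. This is also what keeps $\tilde\Pi_1+\tilde\Pi_{-1}=\mathbf{1}$, so $\{\tilde\Pi_j\}$ remains a valid measurement. I will state this as the implicit assumption, since the noise model only makes sense as a POVM if it holds. Under it,
\begin{align}
\tilde{A}_s=\tfrac{1}{2}+\tfrac{1}{2}\text{Tr}\big(\Pi_1(\tilde\Sigma^+-\tilde\Sigma^-)\big).
\end{align}

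\textbf{Step 3: subtract to get the equality.} Taking $\tilde{A}_s-A_s$, the constants cancel and the absolute value gives exactly Eq.~(\ref{eq:asdifference1}).

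\textbf{Step 4: trace-distance bound.} I rewrite the operator inside the trace as
\begin{align}
\tilde\Sigma^+-\Sigma^++\Sigma^--\tilde\Sigma^-=2(\Sigma_A-\Sigma_B),
\end{align}
with $\Sigma_A,\Sigma_B$ as in the statement. Under the unitality assumption both are unit-trace states, so $\Sigma_A-\Sigma_B$ is traceless. Since $0\le\Pi_1\le\mathbf{1}$, the standard variational characterisation of trace distance, $\tau(\rho,\sigma)=\max_{0\le P\le \mathbf{1}}\text{Tr}(P(\rho-\sigma))$, bounds $|\text{Tr}(\Pi_1(\Sigma_A-\Sigma_B))|$ by $\tau(\Sigma_A,\Sigma_B)$. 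The absolute value is covered because traceless differences are antisymmetric under $P\to\mathbf{1}-P$. Substituting,
\begin{align}
|\tilde A_s-A_s|=\tfrac{1}{2}\cdot 2\,|\text{Tr}(\Pi_1(\Sigma_A-\Sigma_B))|\le\tau(\Sigma_A,\Sigma_B).
\end{align}
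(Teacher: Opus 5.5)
Your proof is correct and follows the paper's argument in Appendix D essentially step for step. You expand $A_s$ and $\tilde{A}_s$ through the averaged states $\Sigma^{\pm}$, move the Kraus operators onto the states by cyclicity to get $\tilde{\Sigma}^{\pm}=\sum_l E_l^{\dagger}\Sigma^{\pm}E_l$, subtract, and regroup the result into $2(\Sigma_A-\Sigma_B)$. Your two additions tighten the paper's argument: you state explicitly the condition $\sum_l E_lE_l^{\dagger}=\mathbf{1}$, which the paper uses silently when it writes $\tilde{A}_s=\tfrac12\text{Tr}(\tilde{\Pi}_1(\Sigma^+-\Sigma^-))+\tfrac12$, and you justify the final trace-distance inequality through its variational characterisation, which the paper simply asserts.
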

\begin{proof}
See Appendix D. 
\end{proof}
Note that furthermore, the projector $\Pi_1$ that \textit{maximises} this upper bound between robustness accuracy and accuracy (for any distribution of states) is achieved by the Helstrom measurement corresponding to the optimal discriminator of $\Sigma_A$ and $\Sigma_B$. \\

We can now provide examples of cases where trade-offs between robustness accuracy and accuracy occur and when they do not. \\

\noindent \textbf{Example 5.}
    In the case where the perturbation leaves the states $\Sigma^{\pm}$ unchanged $\tilde{\Sigma}^{\pm}=\Sigma^{\pm}$, then $\tilde{A}_s=A_s$. However, if we have the transformation $\tilde{\Sigma}^{\pm} \rightarrow \Sigma^{\mp}$, then we have \\
\begin{align}
    |\tilde{A}_s-A_s|=|1-2A_s|.
\end{align}
So for $A_s>1/2$, this clearly implies a trade-off between $\tilde{A}_s$ and $A_s$ as $A_s$ grows. We note that even a small perturbation is sufficient to achieve this outcome if the states $\Sigma^{\pm}$ are near the true decision boundaries and while such a modification could be considered an adversarial example because of the small size of the perturbation. We will see this in the next example. For now, we can test this relation on two simple examples of noise: Pauli noise channel and depolarisation noise. \\

\noindent \textit{(Pauli noise)} Suppose we have the Pauli noise channel $\mathcal{N}_{Pauli}$ for a single qubit $\sigma_i$ where 
\begin{align}
    \mathcal{N}_{Pauli}(\sigma_i)=p_0 \sigma_i+p_1 \sigma_i+p_2\sigma_i+p_3 \sigma_i
\end{align}
and $p_0+p_1+p_2+p_3=1$. Suppose we choose the simplest case where $\Pi_0=|0\rangle \langle 0|$. Then we can derive
\begin{align}
    \tilde{\Sigma}^{\pm}=p_0 \Sigma^{\pm}+p_1 \sigma_x \Sigma^{\pm}\sigma_x+p_2 \sigma_y \Sigma^{\pm}\sigma_y+p_3\sigma_z \Sigma^{\pm}\sigma_z.
\end{align}
It is straightforward to find
\begin{align}
    \tilde{A}_s=(1-2(p_1+p_2))A_s+\frac{3}{2}(p_1+p_2).
\end{align}
Then we have a trade-off if $d\tilde{A}_s/dA_s<0$, but not if $\tilde{A}_s/dA_s\geq 0$. So we see that the trade-off condition is true when $p_1+p_2>1/2$, but otherwise there is no trade-off. As special examples, for the case of bit-flip error (where only $p_0, p_1 \neq 0$), the trade-off condition corresponds to $p_2>1/2$, whereas in the case of phase-flip (where $p_1. p_2=0$) we don't have any trade-off between $\tilde{A}_s$ and $A_s$. \\

\noindent \textit{(Depolarisation noise)} Another example is depolarisation noise $\mathcal{N}_{dep}$ where 
\begin{align}
\mathcal{N}_{dep}(\sigma_i)=(1-p)\sigma_i+\frac{p}{d}\mathbf{1}
\end{align}
and
\begin{align}
   \mathcal{N}_{dep}(\Sigma^{\pm})=(1-p)\Sigma^{\pm}+\frac{p}{d}\mathbf{1}.
\end{align}
This leads to 
\begin{align}
    \tilde{A}_s=(1-p)A_s+\frac{p}{2}
\end{align}
where $d\tilde{A}_s/dA_s=1-p>0$, so there will never be a trade-off between robustness accuracy and accuracy for depolarisation noise. \\

\noindent \textbf{Example 6.}
    We call quantum channels $\mathcal{N}_{adv}$ acting on input quantum states $\sigma$ \textit{adversarial perturbations} if the trace distance is below some given constant $\epsilon_{max} \ll 1$
\begin{align}
    \tau(\sigma, \mathcal{N}_{adv}(\sigma)) \leq \epsilon_{max} 
\end{align}
for every input state $\sigma \sim \mathcal{D}$. Then it is possible to have a maximal trade-off between accuracy and  robustness accuracy $\tilde{A}_s=1-A_s$ for adversarial perturbations of the input states if the following sufficient condition is satisfied
\begin{align}
    \min_i \vert \frac{p_{-,i}}{p_{+,i}}\text{Tr}(\sigma_i^+ \sigma_i^-)\vert \geq 1-\epsilon_{max}
\end{align}
where $p_{\pm, i} \equiv P(\sigma^{\pm}_i)/\sum_{\sigma_j^{\pm}}P(\sigma^{\pm}_j)$. This is simple to show using $\mathcal{N}_{Adv}(\Sigma^{\pm})=\Sigma^{\mp}$, and more specifically $\sigma_i^{\pm} \rightarrow \sigma_i^{\mp}$ with normalisation constants taken into account. Then using $\tau(\sigma,\rho) \leq \sqrt{1-\mathcal{F}(\sigma, \rho)}$, the result follows straightforwardly. However, we note that these perturbations are all due to \textit{relevant perturbations} rather than \textit{irrelevant perturbations}, since they change the true class of the states, so they belong to Type II perturbations, rather than Type I. From the above result, we see that the condition of a trade-off depends only on the distances between states of various classes, which is not a property of any particular model in itself, but rather an intrinsic feature of the problem.\\

We can also look at more general classes of trade-offs between robustness accuracy and accuracy. A more general class of examples of trade-offs between $\tilde{A}_s$ and $A_s$ can come about if we consider small perturbations in the distribution $\mathcal{D}$ itself. Let the initial distribution be $\mathcal{D}_1$ and the distribution after perturbation be $\mathcal{D}_2$. We can define these distributions in the following way.\\

For example, suppose we have an observable $\mathcal{O}_j$ and we define $F_j(\sigma_i)=\text{Sign}\text{Tr}(\mathcal{O}_j\sigma_i)$ as the sign of the $j^{th}$ \textit{quantum feature} of the state $\sigma_i$. Then let us define the sign of one quantum feature $F_0$ such that 
\begin{align}
   &  P(F_0(\sigma^{\pm}_i)=\pm 1|c(\sigma_i)=\pm 1)=f_0 \\
    \nonumber 
    & P(F_0(\sigma^{\mp}_i)=\pm 1|c(\sigma_i)=\mp 1)=1-f_0.
\end{align}
So if $f_0>1/2$, then the feature $F_0$ is positively correlated with the true label $c$. We define $\sigma^{f_0^{\pm}}=\{\sigma| f_0(\sigma)=\pm 1\}$ and the following normalised quantum states 
\begin{align}
    & \Sigma^{++}=\frac{\sum_{\sigma_i^+} P(\sigma_i^+)\sigma^{f_0^+}_i}{\sum_{\sigma_j^+} P(\sigma_j^+)} \nonumber \\
    &\Sigma^{+-}=\frac{\sum_{\sigma_i^+} P(\sigma_i^+)\sigma^{f_0^-}_i}{\sum_{\sigma_j^+} P(\sigma_j^+)} \nonumber \\
    &\Sigma^{-+}=\frac{\sum_{\sigma_i^-} P(\sigma_i^-)\sigma^{f_0^+}_i}{\sum_{\sigma_j^-} P(\sigma_j^-)} \nonumber \\
     &\Sigma^{--}=\frac{\sum_{\sigma_i^-} P(\sigma_i^-)\sigma^{f_0^-}_i}{\sum_{\sigma_j^-} P(\sigma_j^-)} \nonumber \\
\end{align}
Then we have the following general result on when to expect a trade-off between robustness accuracy and accuracy. \\

\noindent \textbf{Example 7.}
    Now we study a class of perturbations such that 
\begin{align}
    & \tilde{\Sigma}^{++}=\Sigma^{+-} \nonumber \\
    & \tilde{\Sigma}^{+-}=\Sigma^{++} \nonumber \\
    & \tilde{\Sigma}^{--}=\Sigma^{-+} \nonumber \\
    &\tilde{\Sigma}^{-+}=\Sigma^{--}.
\end{align}
Then it can be shown that for any $f_0>1/2$ we have the trade-off condition
\begin{align}
    \tilde{A}_s \leq \frac{f_0}{1-f_0}(1-A_s).
\end{align}
From the definitions above, these perturbations occur if for instance we exchange the following
\begin{align}
    \sigma^{f_0^+}_i \leftrightarrow \sigma^{f_0^-}_{i'}
\end{align}
so long as the pair $c_i=c_{i'}$. These are irrelevant perturbations. This means that we can always expect a trade-off between robustness accuracy and accuracy if the perturbations move the state labels according to a quantum feature that is strongly correlated with the true label. This analysis is a generalisation of example 5 where we simply exchange
\begin{align}
      \sigma^{+}_i \leftrightarrow \sigma^{-}_{i'}.
\end{align}
Furthermore, following straightforwardly from results in example 6, we see that the above trade-off between robustness accuracy and accuracy for \textit{irrelevant perturbations} can occur if the following sufficient condition is satisfied 
\begin{align}
    \min_i \text{Tr}(\sigma_i^{f_0^+}\sigma_i^{f_0^-}) \geq 1-\epsilon_{max}.
\end{align}
Taking into account non-trivial $f_0$, we can also similarly derive the bound 
\begin{align}
    |\tilde{A}_s-A_s| \leq (2f_0+1)\tau(\bar{\Sigma}_A, \bar{\Sigma}_B)
\end{align}
where 
\begin{align}
    & \bar{\Sigma}_A \equiv 4f_0(\tilde{\Sigma}^{++}+\tilde{\Sigma}^{-+}+\Sigma^{+-}+\Sigma^{--})+2(\tilde{\Sigma}^{+-}+\Sigma^{-+}) \nonumber \\
    & \bar{\Sigma}_B \equiv 4f_0(\tilde{\Sigma}^{+-}+\tilde{\Sigma}^{--}+\Sigma^{-+}+\Sigma^{++})+2(\tilde{\Sigma}^{-+}+\Sigma^{+-}).
\end{align}

\subsubsection{Absence of trade-offs}
We know from Theorem 1 that $A^*=1$ is a \textit{sufficient condition} for there to be \textit{no trade-off} between $\tilde{A}$ and $A$. For example, see \cite{yang}. We can see that this can be a condition on both the input states themselves and a condition on the model $h$. Note that we only wish for robustness of our model against \textit{irrelevant perturbations}, and we wish for non-robustness against \textit{relevant perturbations}. We suppress indices to distinguish these types of perturbations since all robustness is defined with respect to irrelevant perturbations unless otherwise stated. \\

To find the circumstances when the model $H$ under a noisy perturbation gives $A^*=1$, we assume that for all $\sigma \sim D$, we are given 
$\min_{\sigma \sim D} p_{\sigma}$, where $p_{\sigma}=H_k(\sigma)$ (i.e. the probability that $\sigma$ is assigned to the class that it's in) and $\sigma$ is in class $k$. Therefore $\min_{\sigma \sim D} p_{\sigma}$ is a measure of the closest `distance' of the dataset to the decision boundary for the model. Our purpose is to find a sufficient and necessary condition on the upper bound of $\tau(\sigma, \rho)$ (where $H(\rho)=\tilde{H}(\sigma)$), so that we have prediction-change robustness, i.e. $H(\sigma)=H(\rho)=\tilde{H}(\sigma)$ for all states $\sigma \sim D$, which implies $A^*=1$. \\

Locally, i.e. for a specific $\sigma$ where $H(\sigma)\vert_{max} \equiv p_{\sigma}$, we know that a necessary and sufficient condition for prediction-change robustness (i.e. $\tilde{H}(\sigma)=H(\rho)=H(\sigma)$) is 
\begin{align}
    \tau(\sigma, \rho)<\delta_{\sigma}(1-\sqrt{1-\delta_{\sigma}^2})\equiv \tau_{\sigma}
\end{align}
where $\delta_{\sigma}=\sqrt{1-4p_{\sigma}(1-p_{\sigma})}/2$. This is a statement about local prediction-change robustness. However, we care about global prediction change robustness to find the conditions under which $A^*=1$. To change the above into a global statement for all $\sigma \sim D$, this means we need some upper bound of $\tau(\sigma, \rho)$ for all $\sigma \sim D$. We know that a smaller $p_{\sigma}$ should give rise to a larger upper bound to $\tau(\sigma, \rho)$, so we can define the state that gives this smallest probability
\begin{align}
    \sigma_{min} \equiv \text{argmin}_{\sigma \sim D} p_{\sigma}.
\end{align}
Therefore, if we are given no information at all about which states in $\sigma \sim D$ are chosen, then a necessary and sufficient condition for $A^*=1$ is for the perturbation $\tau$ to be below the size
\begin{align}
    \tau<\delta_{\sigma_{min}}(1-\sqrt{1-\delta_{\sigma_{min}}^2}) \equiv \tau_{\sigma_{\min}}. 
\end{align}
Later we can compute conditions for other values $0\leq A^*<1$. \\

We can state this as our Proposition 1, which is a condition on the dataset for some given $H$.\\

\begin{theorem} For the dataset $\sigma \sim D$ and model $H$ subject to noise where $\tilde{H}(\sigma)=H(\rho)$, then if the trace distance between $\sigma$ and $\rho$ satisfies the upper bound 
\begin{align}
     \tau (\sigma, \rho)<\delta_{\sigma_{min}}(1-\sqrt{1-\delta_{\sigma_{min}}^2}),
\end{align}
this is a sufficient condition for $A^*=1$. 
\end{theorem}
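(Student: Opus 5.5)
The plan is to reduce the global statement to the local criterion stated just before it, and then use the fact that the local threshold $\tau_\sigma$ is monotone in the margin $p_\sigma$. Concretely, I would first fix the setting: $H$ is a (probabilistic, binary or $K$-class) model, and $H_k(\sigma)=\text{Tr}(\Pi_k\sigma)$ is the probability of assigning $\sigma$ to its class $k$, so that $p_\sigma=H_k(\sigma)>1/2$ for correctly labelled points. The noise acts as $\tilde{H}(\sigma)=H(\rho)$. The goal is to show that $H(\rho)=H(\sigma)$ for every $\sigma\sim D$, where $H(\cdot)$ denotes the predicted (argmax) class. Then $h(\tilde\sigma)=h(\sigma)$ with probability one over $\mathcal{D}$, hence $A^*=1$ by Definition 12.

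\textbf{Step 1 (local criterion).} I would justify the local statement $\tau(\sigma,\rho)<\tau_\sigma\Rightarrow H(\rho)=H(\sigma)$. The key inequality is that for any POVM element $0\le\Pi_k\le\mathbf{1}$ one has $|\text{Tr}(\Pi_k\sigma)-\text{Tr}(\Pi_k\rho)|\le\tau(\sigma,\rho)$. The prediction flips only if $\text{Tr}(\Pi_k\rho)$ drops to the boundary $1/2$ (binary case), which requires $\tau(\sigma,\rho)\ge p_\sigma-1/2$. Note that $\delta_\sigma=\sqrt{1-4p_\sigma(1-p_\sigma)}/2=|p_\sigma-1/2|$. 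So the margin is exactly $\delta_\sigma$, and $\tau_\sigma=\delta_\sigma(1-\sqrt{1-\delta_\sigma^2})\le\delta_\sigma$. Therefore $\tau(\sigma,\rho)<\tau_\sigma$ certainly forces $\text{Tr}(\Pi_k\rho)>1/2$, i.e.\ no prediction change. For the sufficiency direction, which is all the theorem claims, this crude bound is enough. I would not try to reproduce the sharper (fidelity-based) form of $\tau_\sigma$; I would simply use $\tau_\sigma\le\delta_\sigma$. For $K>2$, I would use the same argument with the margin taken as the gap to the runner-up class, noting that each of the two probabilities moves by at most $\tau$.

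\textbf{Step 2 (monotonicity).} I would show that $g(p)=\delta(p)(1-\sqrt{1-\delta(p)^2})$ is nondecreasing in $\delta$ on $[0,1/2]$, since both factors are nonnegative and nondecreasing. Since $\delta_\sigma=p_\sigma-1/2$ is increasing in $p_\sigma$ for $p_\sigma\ge 1/2$, the choice $\sigma_{min}=\text{argmin}_{\sigma\sim D}p_\sigma$ gives $\tau_{\sigma_{min}}\le\tau_\sigma$ for every $\sigma\sim D$. The paper's heuristic wording ("smaller $p_\sigma$ gives a larger bound") is reversed. What matters is that the smallest margin gives the \emph{smallest} admissible threshold, and I would state it that way.

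\textbf{Step 3 (globalize).} If $\tau(\sigma,\rho)<\tau_{\sigma_{min}}$ for all $\sigma$, then $\tau(\sigma,\rho)<\tau_\sigma$ for all $\sigma$. Step 1 then gives $H(\rho)=H(\sigma)$ pointwise, and integrating over $\mathcal{D}$ yields $A^*=1$. The main obstacle is Step 1's treatment of stochastic classifiers, because "prediction" is ambiguous there. I would take the prediction to be the most likely outcome, with $p_\sigma>1/2$ assumed, and handle ties or $p_\sigma\le1/2$ by excluding them, since then $\delta_{\sigma_{min}}=0$ and the hypothesis is vacuous.
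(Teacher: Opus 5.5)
In the binary setting your argument is correct. It follows the paper's skeleton: a local criterion, then the worst case over the data via $\sigma_{min}$. It justifies the local step differently, though. The paper only asserts, without proof, that $\tau(\sigma,\rho)<\tau_\sigma$ is necessary and sufficient for the prediction to survive. You instead observe $\delta_\sigma=|p_\sigma-1/2|$ and use $|\text{Tr}(\Pi(\sigma-\rho))|\le\tau(\sigma,\rho)$ for $0\le\Pi\le\mathbf{1}$. This gives robustness for every $\tau<\delta_\sigma$, and $\tau_\sigma\le\delta_\sigma$ closes it. What your route buys:
\begin{itemize}
\item It is self-contained.
\item It shows the theorem holds with the far larger threshold $\delta_{\sigma_{min}}$; the stated one is smaller by a factor $1-\sqrt{1-\delta^2}\le 1-\sqrt{3}/2$.
\item It agrees with the paper's later remark that $\tau<p_\sigma-1/2$ suffices.
\item It shows that the paper's claimed \emph{necessity} of $\tau<\tau_\sigma$ is false in this model.
\end{itemize}
Your correction of the monotonicity sentence is also right: the smallest margin gives the smallest, binding threshold.

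Two parts of the proposal do not hold up. First, the $K>2$ extension fails, and no argument can rescue it. $\delta_\sigma$ sees only the top score's distance from $1/2$, not its gap to the runner-up. So $\tau<\tau_\sigma$ does not imply $\tau<\tfrac12(p_{(1)}-p_{(2)})$, where $p_{(1)}\ge p_{(2)}$ are the two largest scores. For instance, take $\Pi_k=|k\rangle\langle k|$ on $\mathbb{C}^4$ and $\sigma=\mathrm{diag}(0.255,0.25,0.25,0.245)$. Then $\delta_\sigma=0.245$ and $\tau_\sigma\approx 7.5\times 10^{-3}$. Yet $\rho=\mathrm{diag}(0.25,0.255,0.25,0.245)$ has $\tau(\sigma,\rho)=5\times 10^{-3}$ and changes the argmax, so $A^*=0$ for $D$ concentrated on $\sigma$. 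You should restrict the statement to binary models with $\Pi_1+\Pi_2=\mathbf{1}$, where $1/2$ really is the decision boundary.

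Second, your handling of $p_\sigma\le 1/2$ is wrong. $\delta_\sigma=|p_\sigma-1/2|$ vanishes only at $p_\sigma=1/2$, so for $p_\sigma<1/2$ the hypothesis is not vacuous. Your setup suggests $p_\sigma$ means the true-class probability. Under that reading, a confidently misclassified point has small $p_\sigma$ but a large margin. Then $\mathrm{argmin}_\sigma p_\sigma$ selects a point with a large threshold, and your Step 2 breaks. The fix is to define $p_\sigma=\max_k \text{Tr}(\Pi_k\sigma)$, the predicted-class probability (the paper's $H(\sigma)\vert_{max}$). Then $p_\sigma\ge 1/2$ automatically. Whether the label is correct becomes irrelevant, since $A^*$ only concerns prediction change. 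Only exact ties then need your vacuity remark.
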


We can illustrate this in the following example. We can look at a concrete example of a quantum clustering model where 
\begin{align}
    H_k(\sigma)=1-\tau(\sigma, C_k)
\end{align}
where $C_k$ is the centroid state of the cluster $K$ of states for $\sigma \sim D$. For instance, in the case where $\sigma$ and all $C_k$ are pure states, we have the simple model
\begin{align}
    H_k(\sigma)=F(\sigma, C_k)=\text{Tr}(|C_k\rangle \langle C_k| \sigma),
\end{align}
where the classifier is
\begin{align}
    h(\sigma)=\text{argmax}_k H_k(\sigma).
\end{align}
Clearly, the largest $H_k(\sigma)$ occurs for the state $C_k$ that $\sigma$ is closest to, hence $\sigma$ is identified to belong to cluster $K$. In this case, it is easy to see how the distribution of the sizes of the clusters are sufficient for us to find $p_A$, since 
\begin{align}
    p_A(\sigma)=1-\tau(\sigma, C_A).
\end{align}
Therefore, if we are given the size of the largest cluster, then this gives us $p_{\sigma_{\min}}$, and hence $\tau_{\sigma_{\min}}$. Once inserted into Theorem 3, this would give us the sufficient condition for $A^*=1$ for this model. \\

Aside from $A^*=1$, we see from Theorem 1 that $1>A^*>1/2$ is also sufficient for no trade-offs between robustness accuracy and accuracy. Sufficient conditions for this to be satisfied would also be helpful. From Theorem 3, we saw that for any case where $\tau<\tau_{\sigma_{\min}}$, then $A^*=1$ since $h(\rho)=h(\sigma)$ for all $\sigma \sim D$. However, it may be that we require only $h(\rho)=h(\sigma)$ for some percentage of $\sigma \sim D$ instead of all cases. Suppose we fix $A^*=a$ where $0 \leq a<1$ is some constant. For the distribution $D$, suppose we are given also the distribution of $p_{\sigma}$ values (i.e. the maximum probability of $\sigma$ being in a particular class, which will also be the class assigned to $\sigma$). Let this distribution be labelled $P_{\sigma}(p_A)$, distributed over all $p_A$ values for all $\sigma \sim D$. Then we have the following related results. \\

A sufficient condition for the probability $A^*=a$ to hold is when perturbations of any $\sigma \in D$ satisfies 
\begin{align}
    \tau<\delta_{p_a}(1-\sqrt{1-\delta^2_{p_a}})
\end{align}
where 
\begin{align}
    a=A^*=\int^1_{p_a} P_{\sigma}(p_A) dp_A.
\end{align}
This means that either we can be given some upper bound to $\tau$ where we calculate the corresponding $p_a$ and hence the corresponding $a$ value, or that we are given $a$ and identify the corresponding upper bound to $\tau$. Either way, this requires access to the distribution $P_{\sigma}(p_A)$, which one may or not be able to access. 
\subsubsection{Applications in studying trade-offs}
There are at least three main areas where the study of trade-offs between robustness accuracy and accuracy may be helpful for
\begin{enumerate}
    \item Incompatible noise
    \item Adversarial perturbations
    \item No free lunch theorem
\end{enumerate}
and we will briefly discuss these and present open questions. \\

\noindent \textbf{\textit{Incompatible noise --}}\\

\noindent We first present a definition of \textit{incompatible noise}. 
\begin{definition}
        Let $\tilde{A}^{(1)}_s$ be the robustness accuracy with respect to noise $\mathcal{N}_1$ and $\tilde{A}^{(2)}_s$ is the robustness accuracy with respect to noise $\mathcal{N}_2$. We call two noise perturbations  $\mathcal{N}_1$ and $\mathcal{N}_2$ \textit{incompatible noise} if 
\begin{align}
    \frac{d \tilde{A}^{(2)}_s}{d\tilde{A}^{(1)}_s}<0.
\end{align}
This means that if we train a model to be more accurate in presence of noise $\mathcal{N}_1$, it becomes less accurate when in presence of noise $\mathcal{N}_2$. 
\end{definition}
We can show a simple example of this for a quantum classifier. \\

\noindent \textbf{Example 8.}
As an illustrative simple example for single-qubits with $\Pi_1=|0\rangle \langle 0|$, we let $\mathcal{N}_1$ be depolarisation noise with noise parameter $p_1$ and $\mathcal{N}_2$ be bit-flip noise with noise parameter $p_2$, so $\mathcal{N}_2(\Sigma^{\pm})=(1-p_2)\Sigma^{\pm}+p_2 \sigma_x \Sigma^{\pm} \sigma_x$. Then we can derive
\begin{align}
    & \tilde{A}^{(1)}_s=(1-p_1)A_s+\frac{p_1}{2}\nonumber \\
    &\tilde{A}^{(2)}_s=(1/2-p_2)A_s+(1/2+p_2)/2
\end{align}
hence
\begin{align}
     \frac{d \tilde{A}^{(2)}_s}{d\tilde{A}^{(1)}_s}= \frac{d \tilde{A}^{(2)}_s}{d A_s} \frac{d A_s}{d\tilde{A}^{(1)}_s}=\frac{1/2-p_2}{1-p_1}
\end{align}
In this case, since $p_1<1$, depolarisation noise and bit-flip noise are \textit{incompatible} if and only if $p_2>1/2$.\\

By understanding which perturbations are incompatible with each other, we can better understand the ultimate limitations of our model and to enable better model building. \\

\noindent \textbf{\textit{Adversarial examples --}}\\

\noindent From Example 6, we saw that one way to consider adversarial perturbations are quantum channels $\mathcal{N}_{adv}$ acting on input quantum states $\sigma$ such that the trace distance is below some given constant $\epsilon_{max} \ll 1$
\begin{align}
    \tau(\sigma, \mathcal{N}_{adv}(\sigma)) \leq \epsilon_{max} 
\end{align}
for every input state $\sigma \sim \mathcal{D}$: these include both Type I and Type II perturbations. However, to get true \textit{adversarial examples for state $\sigma$}, we specifically require three conditions:
\begin{itemize}
    \item $\tau(\sigma, \mathcal{N}_{adv}(\sigma)) \leq \epsilon_{max}$, \qquad $\epsilon \ll 1$
    \item $\mathcal{N}_{adv}$ acting on $\sigma$ are also \textit{irrelevant perturbations}, i.e., do not change the true label of $\sigma$: $c(\sigma)=c(\mathcal{N}_{adv}(\sigma))$
    \item $h(\sigma) \neq h(\mathcal{N}_{adv}(\sigma))$
\end{itemize}  
Examples 1-6 all present cases of trade-offs between robustness accuracy and accuracy for \textit{relevant perturbations}. However, in Example 7, we saw that we could also have these trade-offs for \textit{irrelevant perturbations}, and Example 8 also allows the possibility for irrelevant perturbations. Thus trade-offs might also exist in the case of training to make models more robust to adversarial examples. For instance, in the above three conditions, the last deals with the corrupted-instance robustness of the model $A^*$, and we already know from Theorem 1 that a low robustness $A^*<1/2$ to $\mathcal{N}_{adv}$ of Type I perturbations on $\sigma$ is sufficient to guarantee that increasing accuracy $A$ alone will only lead to worse and worse robustness accuracy. We will see examples of this type of behaviour in the next subsection with linear loss. \\

\noindent \textbf{\textit{No free lunch theorem --}}\\

\noindent General trade-off behaviour between robustness acccuracy and accuracy, to both relevant and irrelevant perturbations, can also be considered from the perspective of the no free lunch theorem and also provide ways of constructing the best models. The no free lunch theorem is the observation that for every distribution $\mathcal{D}$ of states for which a model $h$ performs well (e.g. has high accuracy), there should be some distribution $\tilde{\mathcal{D}}$ of states for which the same model $h$ would perform badly. However, the no free lunch theorem doesn't provide an explicit construction of these distributions for any model, since it's only an existence theorem. We would like to understand what the relationship between one distribution where the particular model $h$ works very well, versus a distribution for which the model has low accuracy. 

We can connect the setting of the no free lunch theorem to the relationship between accuracy and robustness accuracy in the following way. \\

We first observe that when we make a perturbation $\mathcal{N}$ on the input state $\sigma$ for the original distribution $\mathcal{D}_1$, we note that we can instead consider this as the same scenario as when selecting $\sigma$ from a modified distribution $\mathcal{D}_2$. This means we can rewrite the loss function for the perturbed states $\tilde{\sigma}=\mathcal{N}(\sigma)$ where $\sigma \sim \mathcal{D}$ as identical to the loss function of the unperturbed states selected from a different distribution $\sigma \sim \mathcal{D}_2$
\begin{align}
   &\tilde{L}_{\mathcal{D}_1}=\textbf{E}_{(\sigma,c)\sim \mathcal{D}_1} l_h(\tilde{\sigma}, c_{\sigma})=\textbf{E}_{(\sigma, c_{\sigma})\sim \mathcal{D}_2} l_h(\sigma, c_{\sigma})=L_{\mathcal{D}_2}.
\end{align}
This simple rewriting leads to the following interpretation. In the context of $1-0$ loss, the first quantity is the corrupted-instance robustness accuracy with respect to $\mathcal{N}$ when selecting states from $\mathcal{D}_1$. The second quantity is accuracy, when selecting states from a different distribution $\mathcal{D}_2$, where $\mathcal{D}_2$ is simply the selection of states in $\mathcal{D}_1$ but all perturbed by $\mathcal{N}$. This means that it's possible to consider the robustness accuracy of one model with respect to a distribution $\mathcal{D}_1$ as identical to the accuracy of the same model with respect to another distribution $\mathcal{D}_2$.\\

Thus we can write $\tilde{A}_1=A_2$. If $A^*$ is the fraction of states $\sigma$ that satisfy $h(\sigma)=h(\mathcal{N}(\sigma))$, then from Theorem 1 we have 
\begin{align}
    A_2=A_1(2A^*-1)+(1-A^*).
\end{align}
Therefore a trade-off relationship between $A_1$ and $A_2$ denotes that the better a model $h$ is with respect to one distribution $\mathcal{D}_1$, the worse the model $h$ is with respect to distribution $\mathcal{D}_2$. We can find the relationship between $\mathcal{D}_1$ and $\mathcal{D}_2$ precisely when we identify the perturbation $\mathcal{N}$ such that $A^*<1/2$. As a worst-case example, when $A^*=0$
\begin{align}
    A_2=1-A_1,
\end{align}
and this means that for the set of states $\sigma \sim \mathcal{D}_1$ and the model $h$ where $h(\sigma)\neq h(\mathcal{N}(\sigma))$ for all states $\sigma \sim \mathcal{D}_1$, when $h$ is the most accurate for $\sigma \sim \mathcal{D}_1$, the same model $h$ is least accurate when the input states are selected from $\mathcal{N}(\sigma)$ where $\sigma \sim \mathcal{D}_1$. \\

Thus the trade-off relationship between robustness accuracy and accuray could also provide a window into more constructive ways of examining the no free lunch theorem and to be able to better pinpoint which models work more effectively for which distributions. 
\subsection{Linear loss}
We can similarly consider models for classification based on linear loss instead of $1-0$ loss, and examine the relationship between the global loss $L$ and the corrupted-instance robust loss $\tilde{L}$. Note that in the previous section for the $1-0$ loss function, the global loss is termed accuracy $A$ and the corrupted-instance robust loss is termed the corrupted instance robustness accuracy $\tilde{A}$. \\

For the linear loss model, we have
\begin{align}
    l_h=-c_i H(\sigma_i)
\end{align}
where $h(\sigma_i)=\text{sign}H(\sigma_i)$. Suppose we have $M$ training states $\{(\sigma_i, c_i)\}_{i=1}^M$. Then we can write
\begin{align}
    L_{\theta}=-\frac{1}{M}\sum_{i=1}^M c_i H_{\theta}(\sigma_i) \nonumber \\
    \tilde{L}_{\theta}=-\frac{1}{M}\sum_{i=1}^M c_i H_{\theta}(\tilde{\sigma}_i)
\end{align}
The relationship between $L$ and $\tilde{L}$ can similarly be applied to identify \textit{incompatible noise}, adversarial perturbations and example, and the no free lunch theorem. We will only provide a few examples here for illustration, since many similar examples already appeared for $1-0$ loss that already motivated some of our fundamental questions. \\

To understand the relationship between $L$ and $\tilde{L}$, we first study an idealised case, where we assume that there are $K_{\theta}=K$ states in the training set such that 
\begin{align} \label{eq:ideal1}
    H(\tilde{\sigma}_i)=H(\sigma_i),
\end{align}
and we denote these as `robust' states $\sigma_{i_R}$. The rest of the $M-K$ states satisfy
\begin{align} \label{eq:ideal2}
    H(\tilde{\sigma}_i)=-H(\sigma_i)
\end{align}
and we denote these states by $\sigma_{i_{\bar{R}}}$. In this case, $K/M=A^*$. Then we can decompose the loss functions in the following way
\begin{align}
    & L=-\frac{1}{M}\left(\sum_{i_{R=1}}^K c_i H(\sigma_{i_R})+\sum_{i_{\bar{R}}=1}^{M-K}c_i H(\sigma_{i_{\bar{R}}})\right) \nonumber \\
    &=\frac{K}{M}L_R+\frac{M-K}{M}L_{\bar{R}}
\end{align}
where $L_R$ and $L_{\bar{R}}$ denote the `robust' and `non-robust' parts of the loss function and is defined as
\begin{align}
    & L_R=-\frac{1}{K}\sum_{i_{R=1}}^K c_i H(\sigma_{i_R}) \\
    & L_{\bar{R}}=-\frac{M-K}{M}\sum_{i_{\bar{R}}=1}^{M-K}c_i H(\sigma_{i_{\bar{R}}}).
\end{align}
Similarly we can rewrite
\begin{align}
   & \tilde{L}=-\frac{1}{M}\left(\sum_{i_{R=1}}^K c_i H(\tilde{\sigma}_{i_R})+\sum_{i_{\bar{R}}=1}^{M-K}c_i H(\tilde{\sigma}_{i_{\bar{R}}})\right) \nonumber \\
   & =\frac{K}{M}L_R-\frac{M-K}{M}L_{\bar{R}}.
\end{align}
Comparing $L$ and $\tilde{L}$, it becomes clear where any possible trade-off between $L$ and $\tilde{L}$ comes from: the non-robust $L_{\bar{R}}$ part of the cost function and a small $A^*$, since
\begin{align}
    \frac{L-\tilde{L}}{2}=\frac{M-K}{M}L_{\bar{R}}.
\end{align}
We observe from the above expression that if the initial good performance of the model -- measured according to $L$ -- is mostly due to the `non-robust' parts $L_{\bar{R}}$, then when it comes to robustness training using $\tilde{L}$, the model will perform poorly. Although $L_{\bar{R}}$ is not quite the prediction-change robust loss $L^*$, this case is still analogous to the case of $1-0$ loss, where if robustness is poor $A^*<1/2$, even if $A$ is very high, $\tilde{A}$ will be low. \\

We can then examine the relationship between $L$ and $\tilde{L}$ for quantum models. 
\begin{theorem}
Suppose we look at quantum models where
\begin{align}
    H(\sigma)=\text{Tr}(\mathcal{M}\sigma).
\end{align}
Then for binary classification with unbiased data we have the difference
\begin{align}
    L-\tilde{L}=\text{Tr}(\mathcal{M}(T^--T^+))
\end{align}
where $T^{\pm} \equiv (\Sigma^{\pm}+\tilde{\Sigma}^{\mp})/2$ and $\sigma_i^{\pm}=\{\sigma_i \vert c_i=\pm 1\}$ and $\Sigma^{\pm}=(2/M)\sum_{i=1}^{M/2}\sigma_i^{\pm}$, $\tilde{\Sigma}^{\pm}=(2/M)\sum_{i=1}^{M/2}\tilde{\sigma}_i^{\pm}$. 
\end{theorem}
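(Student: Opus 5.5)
The plan is a direct computation from the definition of the linear loss, using only linearity of the trace. The model is $H(\sigma)=\text{Tr}(\mathcal{M}\sigma)$ and the local loss is $l_h=-c_iH(\sigma_i)$. I will first write
\begin{align}
L=-\frac{1}{M}\sum_{i=1}^M c_i\,\text{Tr}(\mathcal{M}\sigma_i), \qquad \tilde{L}=-\frac{1}{M}\sum_{i=1}^M c_i\,\text{Tr}(\mathcal{M}\tilde{\sigma}_i).
\end{align}
The unbiased-data assumption for binary classification means exactly $M/2$ training states have $c_i=+1$ and $M/2$ have $c_i=-1$. I will therefore split each sum into the $\sigma_i^+$ and $\sigma_i^-$ parts. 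Since $c_i=\pm1$, the label factor just becomes an overall sign on each part.

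Next I pull the sums inside the trace by linearity. By the definition $\Sigma^{\pm}=(2/M)\sum_{i=1}^{M/2}\sigma_i^{\pm}$, we have $\frac{1}{M}\sum_{i=1}^{M/2}\sigma_i^{\pm}=\frac{1}{2}\Sigma^{\pm}$. This gives
\begin{align}
L=-\tfrac{1}{2}\text{Tr}\big(\mathcal{M}(\Sigma^+-\Sigma^-)\big), \qquad \tilde{L}=-\tfrac{1}{2}\text{Tr}\big(\mathcal{M}(\tilde{\Sigma}^+-\tilde{\Sigma}^-)\big).
\end{align}
For $\tilde{L}$ I use the same class split, with the true label $c_i$ of the original state $\sigma_i$. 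This is the corrupted-instance convention, so $\tilde{\sigma}_i^{\pm}$ is the perturbed version of $\sigma_i^{\pm}$ and keeps its original label.

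Subtracting gives
\begin{align}
L-\tilde{L}=\tfrac{1}{2}\text{Tr}\big(\mathcal{M}(\Sigma^-+\tilde{\Sigma}^+-\Sigma^+-\tilde{\Sigma}^-)\big).
\end{align}
Regrouping the four terms, the combination $\tfrac{1}{2}(\Sigma^-+\tilde{\Sigma}^+)$ is $T^-$ and $\tfrac{1}{2}(\Sigma^++\tilde{\Sigma}^-)$ is $T^+$, using $T^{\pm}=(\Sigma^{\pm}+\tilde{\Sigma}^{\mp})/2$. This yields the stated $L-\tilde{L}=\text{Tr}(\mathcal{M}(T^--T^+))$.

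There is no real obstacle. The only point requiring care is the bookkeeping of normalisations: the factor $1/M$ in the loss versus the factor $2/M$ in $\Sigma^{\pm}$. The unbiasedness assumption is what makes these match, so that $T^{\pm}$ are genuine normalised states. I will also note the convention that $\tilde{\sigma}_i$ inherits the label of $\sigma_i$, so that $\tilde{\Sigma}^{\pm}$ are well defined even for relevant perturbations.
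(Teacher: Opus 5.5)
Your proposal is correct and follows the paper's own proof essentially verbatim. Like the paper, you split the linear loss by class using the unbiased $M/2$--$M/2$ partition, use linearity of the trace to get $L=-\tfrac{1}{2}\text{Tr}(\mathcal{M}(\Sigma^+-\Sigma^-))$ and the analogous expression for $\tilde{L}$, then subtract and regroup into $T^{\mp}$. Your explicit regrouping step and your note that $\tilde{\sigma}_i$ keeps the label of $\sigma_i$ (the corrupted-instance convention) only spell out details the paper leaves implicit.
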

\begin{proof}
   This follows easily from the definitions
\begin{align}
    &L=-\frac{1}{M}\left(\sum_{i^+=1}^{M/2}\text{Tr}(\mathcal{M}\sigma_i^+)-\sum_{i^-=1}^{M/2}\text{Tr}(\mathcal{M}\sigma_i^-)\right)=-\frac{1}{2}(\text{Tr}(\mathcal{M}\Sigma^+)-\text{Tr}(\mathcal{M}\Sigma^-)), \\
    & \tilde{L}=-\frac{1}{2}(\text{Tr}(\mathcal{M}\tilde{\Sigma}^+)-\text{Tr}(\mathcal{M}\tilde{\Sigma}^-)).
\end{align}
\end{proof}
Suppose we have two irrelevant perturbations $\mathcal{N}_1$ and $\mathcal{N}_2$. In the idealised case there are $K$ states that obey
\begin{align}
    H(\mathcal{N}_1(\sigma_i))=H(\mathcal{N}_2(\sigma_i))
\end{align}
and $M-K$ states obeying 
\begin{align}
    H(\mathcal{N}_1(\sigma_i))=-H(\mathcal{N}_2(\sigma_i)),
\end{align}
where the former states are `robust' states, whereas the latter are considered `non-robust'. Then the relationship between the $L_1$ and $L_2$ are given by 
\begin{align}
   L_1-L_2=2\frac{M-K}{M}L_{\bar{R}}
\end{align}
where 
\begin{align}
    L_{\bar{R}}=-\frac{1}{M}\sum_{i_{\bar{R}}}^{M-K} c_i H(\sigma_{i_{\bar{R}}})
\end{align}
where $\{\sigma_{i_{\bar{R}}}\}$ are the set of states of the non-robust states. In the more realistic quantum case, we have 
\begin{align}
    L_1-L_2=\text{Tr}(\mathcal{M}(N^--N^+))
\end{align}
where $N^{\pm}\equiv (\Omega_1^{\pm}+\Omega_2^{\mp})/2$ with $\Omega_1^{\pm}=(2/M)\sum_{i=1}^{M/2}\mathcal{N}_1(\sigma_i^{\pm})$ and $\Omega_2^{\pm}=(2/M)\sum_{i=1}^{M/2}\mathcal{N}_2(\sigma_i^{\pm})$. \\

We see that it's also possible to have trade-offs between robust loss and loss for those adversarial perturbations that are in addition irrelevant perturbations. 

\noindent \textbf{Example 9.}
    Here we have an explicit example for a single-qubit model where an irrelevant perturbation is gives rise to a trade-off between the loss and the perturbed loss. For instance, let us have the input states from the training set:
\begin{align}
    &\sigma^+_1=|0\rangle \langle 0| \nonumber \\
    &\sigma^+_2=|1\rangle \langle 1| \nonumber \\
    & \sigma^-_1=\frac{5}{6}|0\rangle \langle 0|+\frac{1}{6}|1\rangle \langle 1| \nonumber \\
    &\sigma^-_2=\frac{1}{3}|0\rangle \langle 0|+\frac{2}{3}|1\rangle \langle 1| \nonumber \\
\end{align}
Suppose under an irrelevant perturbation $\mathcal{N}_{irr}$ we have the following transition
\begin{align}
     &\tilde{\sigma}^+_1=|1\rangle \langle 1| \nonumber \\
    &\tilde{\sigma}^+_2=\frac{1}{4}|0\rangle \langle 0|+\frac{3}{4}|1\rangle \langle 1| \nonumber \\
    & \tilde{\sigma}^-_1=\frac{5}{6}|0\rangle \langle 0|+\frac{1}{6}|1\rangle \langle 1| \nonumber \\
    &\tilde{\sigma}^-_2=\frac{4}{5}|0\rangle \langle 0|+\frac{1}{5}|1\rangle \langle 1| \nonumber \\
\end{align}
In this case we see that 
\begin{align}
    &\Sigma^+=\frac{1}{2}(|0\rangle \langle 0|+|1\rangle \langle 1|) \nonumber \\
    & \Sigma^-=\frac{1}{36}(21|0\rangle \langle 0|+15|1\rangle \langle 1| \nonumber \\
    & \tilde{\Sigma}^+=\frac{1}{8}(|0\rangle \langle 0|+7|1\rangle \langle 1|) \nonumber \\
    & \tilde{\Sigma}^-=\frac{1}{60}(49|0\rangle \langle 0|+11|1\rangle \langle 1|)
\end{align}
Here we see that 
\begin{align}
   & L=\frac{1}{12}\text{Tr}(\mathcal{M}(|0\rangle \langle 0|-|1\rangle \langle 1|)) \nonumber \\
   & \tilde{L}=-\frac{83}{120}\text{Tr}(\mathcal{M}(|0\rangle \langle 0|-|1\rangle \langle 1|)).
\end{align}
This means that $L$ and $\tilde{L}$ cannot be simultaneously minimised. In fact, minimising $L$ leads to $\mathcal{M}=|1\rangle \langle 1|-|0\rangle \langle 0|$, which would in fact maximise $\tilde{L}$. \\

We note that there are fundamental reasons why linear loss with linear quantum models are not always sufficient to contain the symmetries of the true underlying models. For instance, we know (from previous theorems [ ]) that if $\tau(\sigma, \rho)<p_{\sigma}-1/2$, then it is the case that $h(\sigma)=h(\rho)$ for the linear quantum model $h$. However, suppose we have $\sigma=\sigma^+$ and $\rho=\mathcal{N}(\sigma^-)$, which means that $c(\rho)=-c(\sigma)$. If in addition $\tau(\sigma^+, \tilde{\sigma}^-)<p_{\sigma}-1/2$, this means that $h(\rho)=h(\sigma)$, which cannot represent $c$ exactly. Therefore, if enough states $\sigma$ satisfy $\tau(\sigma^+, \tilde{\sigma}^-)<p_{\sigma}-1/2$, then the ability of $h$ to represent $c$ is low. We can also consider the following example \\

\noindent \textbf{Example 10.}
    We have the training states $\{\sigma^+_1, \sigma^+_2, \sigma^-_3, \sigma^-_4\}$ and the states modified by irrelevant perturbations, so that the test states are $\{\tilde{\sigma}^+_1, \tilde{\sigma}^+_2, \tilde{\sigma}^-_3, \tilde{\sigma}^-_4\}$, where by definition 
\begin{align}
    &c(\sigma^+_1)=c(\sigma^+_2)=c(\tilde{\sigma}^+_1)=c(\tilde{\sigma}^+_2)=1 \nonumber \\
    &=-c(\sigma^-_3)=-c(\sigma^-_4)=-c(\tilde{\sigma}^-_3)=-c(\tilde{\sigma}^-_4)
\end{align}
In this example, we see that when $L_{train}$ is optimised there is a large trade-off with $L_{test}$. Suppose we use the linear loss using $h(\sigma)=\text{sign}H(\sigma)$ so that $L=-\sum_i c_i h(\sigma_i)$. Now we see that the condition 
\begin{align}
    \tau(\sigma_i^{\pm}, \mathcal{N}_{irr}({\sigma}^{\mp}_i))\leq p_{\sigma_i}-1/2
\end{align}
for $i=1,2,3,4$ is sufficient to make $L=-\tilde{L}$ when $L$ is optimised. For instance, the above condition is sufficient to ensure that $h$ does \textit{not} preserve the same symmetries as $c$, while still optimising $L$, since
\begin{align}
    c(\sigma^{\pm}_i)=h(\sigma^{\pm}_i)=\pm 1
\end{align}
while 
\begin{align}
    c(\tilde{\sigma}^{\pm}_i)=-h(\tilde{\sigma}^{\pm}_i)=\pm 1.
\end{align}
Then we see that while $L=-4$, we get $\tilde{L}=4$. In this case, we see this to be a consequence of the states $\mathcal{N}_{irr}(\sigma^{\pm})$ being too close to $\sigma^{\mp}$ in trace distance, while they are actually differently classified. \\

Note that this is \textit{not} necessarily saying that $\mathcal{N}_{irr}$ is an adversarial perturbation, since it is possible for $\tau(\sigma_i^{\pm}, \mathcal{N}_{irr}({\sigma}^{\mp}_i))$ to be small, while $\tau(\sigma_i^{\pm}, \mathcal{N}_{irr}({\sigma}^{\pm}_i))$ not to be small (hence not adversarial).\\

From these examples, it is clear that it is necessary in some cases it is better that the symmetries of the true classification be explicitly included. 
\subsection{Role of relevant and irrelevant features}
We know from previous examples that standard training (optimising accuracy) and robustness training (optimising robustness accuracy) for states selected from the \textit{same} distribution can give rise to different outcomes in terms of how well theu ultimately perform, and this is because different properties or \textit{features} of the input states are being learned or prioritised. Although we might have an intuitive feeling what a feature might mean (e.g. the roundness of a ladybug, the cuteness level of a cat, the toxicity of a death-cap mushroom, the purity of a quantum state...etc), we must begin by formally defining what we might mean by a feature. \\

\begin{definition} 
We can define a \textit{feature} $f_{\Lambda}$ with respect to operator $\Lambda$ acting on a quantum state $\sigma$ as $f_{\Lambda}: \sigma \rightarrow \mathbb{R}$, where 
    \begin{align}
        f_{\Lambda}(\sigma)=\text{Tr}(\Lambda(\sigma)).
    \end{align}
    Then we can call a feature robust or non-robust depending on how well-correlated this feature is to the true label $c(\sigma)$ of $\sigma$. In principle $\Lambda$ can also be nonlinear (e.g. $\Lambda(\sigma)=\sigma^2$, as in the case of obtaining the purity of a quantum state).  
    \end{definition}

    \begin{definition} We call a feature $f_{\Lambda}$ \textit{$k$-useful} for a given distribution $\mathcal{D}$ if 
    \begin{align}
        \textbf{E}_{(\sigma, c)\sim  \mathcal{D}}(c(\sigma) \cdot f_{\Lambda}(\sigma))\geq k
    \end{align}
    \end{definition}
    
    \begin{definition} 
    We say $f_{\Lambda}$ is a \textit{robust feature} with respect to $(\mathcal{D}, \mathcal{N}_{\delta}, k, \gamma)$ if it is $k$-useful and that under perturbations $\mathcal{N}_{\delta}$ it remains $\gamma$-useful, so
    \begin{align}
        \textbf{E}_{(\sigma, c)\sim  \mathcal{D}}(\inf_{\delta} c(\sigma) \cdot f_{\Lambda}(\mathcal{N}_{\delta}(\sigma)))\geq \gamma. 
    \end{align}
    \end{definition}
    Then a useful but non-robust feature is a feature which is $k$-useful for some $k>0$ but is not $\gamma$-robust for any $\gamma \geq 0$. Note that the above robust feature definition can also be extended to be defined with respect to different loss functions and having different versions of robust features in the same way we have different robustnesses in Section 2.\\
    
    A working hypothesis is that, given  $\mathcal{D}$ and hypothesis class $\mathcal{H}$, if the (standard) accuracy relies heavily on $k$-useful but non-robust features, then the accuracy can be high while the robustness accuracy remains low. This would identify the cause of these trade-off relationships and provide a means to more efficiently resolve these, if possible. \\
    
    To express this mathematically, we have the following conjecture.\\
    
    \noindent \textbf{Conjecture 1.} Let the feature $f_{\Lambda}$ be $k-$useful, so  let $\textbf{E}_{(\sigma, c)\sim  \mathcal{D}}(c(\sigma) \cdot f_{\Lambda}(\sigma))\geq k_{\Lambda}$ and it is non-robust, so $\textbf{E}_{(\sigma, c)\sim  \mathcal{D}}(\inf_{\delta} c(\sigma) \cdot f_{\Lambda}(\mathcal{N}_{\delta}(\sigma)))\leq \gamma_{\Lambda}$. Then if the hypothesis function $h$ has high standard accuracy so $A>a>1/2$, then it is possible to show that this provides an upper bound to the robustness accuracy $\tilde{A}<g(k_{\Lambda}, \gamma_{\Lambda}, a)$, where larger $k_{\Lambda}$, small $\gamma$ and larger $a$ leads to smaller $\tilde{A}$.\\
    
    There are some models which demonstrate this intuition and we discuss simple cases. \\
    
\noindent \textbf{Example 11.}
     We can give a concrete demonstration of a quantum example, where we have the product state $\sigma=\sigma_0\otimes \sigma_1 \otimes...\otimes \sigma_d$. We can also consider this a map $x_i\rightarrow \sigma_i$. Other possibilities exist, but  we start with a simple example. \\
     
     The distribution we are selecting from $(\sigma, c(\sigma)) \sim \mathcal{D}$ is specified. So we let $\{p_i, \sigma_i\}$ to be the $\sigma$ states we are selecting from, where $p_i$ are the probabilities in which density matrices $\sigma_i$ are sampled. Note that this means the full density matrix describing the distribution $\bar{\sigma}=\sum_i p_i \sigma_i$ is \textit{not} part of the distribution $\mathcal{D}$. \textit{Only} the states $\rho_i$ are included. \\
     
     Suppose each $\sigma_i=|\sigma_i\rangle \langle \sigma_i|$ are single pure qubits and $|\sigma_0\rangle=\cos(\theta_0/2)|0\rangle+\exp(i \phi) \sin(\theta_0/2)|1\rangle$. Let the true class $c(\sigma)$ be \textit{defined} as whether the qubit $\sigma_0$ is in the upper hemisphere of the Bloch sphere or not, so 
     \begin{align}
         c(\sigma)=\text{sgn}(\text{Tr}(|0\rangle \langle 0| \sigma_0)-1/2)=\text{sgn}(\cos(\theta_0^2/2)-1/2).
     \end{align}
      This means if $\sigma_0$ is in the upper hemisphere, or $\theta_0 \in [0, \pi/2)$, then $c(\sigma)=1$ and otherwise $\theta_0 \in [\pi/2, \pi]$ so $c(\sigma)=-1$. \\
      
      We can rewrite $c(\sigma)=\text{sgn}\text{Tr}((\lambda_0 \otimes \mathbf{1}^{\otimes d}) \sigma)$ where $\lambda_0=|0\rangle \langle 0|-(1/2)\mathbf{1}=(1/2)(|0\rangle \langle 0|-|1\rangle \langle 1|)$. However, the classifier $h(\sigma)$ need not be perfect. Suppose it misclassifies a fraction $p$ of states in the upper hemisphere but classifies all the states in the lower hemisphere correctly. Then we can write such a hypothesis as 
      \begin{align}
          h_0(\sigma)=\text{sgn}(\text{Tr}((\lambda'_0 \otimes \mathbf{1}^{\otimes d}) \sigma) 
      \end{align}
      where $\lambda'_0=|0\rangle \langle 0|-(1/2+\eta)\mathbf{1}=(1/2-\eta)|0\rangle \langle 0|-(1/2+\eta)|1\rangle \langle 1|$.\\
      
      Suppose in our distribution half of the states $\sigma_0$ are sampled from the upper hemisphere and half the states are sampled from the lower hemisphere, so $P(c(\sigma)=1)=1/2=P(c(\sigma)=-1)$. By construction of our hypothesis, we define that $P(c(\sigma)=1=-h_0(\sigma))=p$ and $P(c(\sigma)=-1=h_0(\sigma))=1/2$. If we use this hypothesis, then we know that the accuracy is the probability
      \begin{align}   & P(c(\sigma)=h_0(\sigma))=P(c(\sigma)=1=h_0(\sigma))+P(c(\sigma)=-1 \nonumber \\
      &=h_0(\sigma))=(1/2-p)+1/2=1-p.
      \end{align}
     This classifier $h_0(\sigma)$ is quite robust in the sense that the distribution of $\sigma_0$ (originally evenly distributed over the whole Bloch sphere) must be modified substantially before the accuracy decreases (i.e. moving $\sigma_0$ from the top hemisphere all towards the equator around which the fraction of $p$ states are). \\
     
     However, this $h_0(\sigma)$ is not the only hypothesis. Suppose we have observables on the other states $\sigma_1,...,\sigma_d$, but these are more weakly correlated with $c(\sigma)$. For instance, let the distribution of $\sigma_{l \neq 0}$ be such that the feature
     \begin{align}
         f_l(\sigma)=\text{Tr}(\Lambda_l \sigma)=\text{Tr}(\lambda_l \sigma_l)
     \end{align}
     has a probability distribution centered on $\pm|\xi|<<1$ for $c(\sigma)=\pm 1$ and we call these distributions $G_+$ and $G_-$ respectively. The local operators $\lambda_l=(1/2)(|0\rangle \langle 0|-|1\rangle \langle 1|$ and the details of the distributions $G_{\pm}$ comes from the choice in states $\sigma_1,...,\sigma_d$ (e.g., unlike for $\sigma_0$ which are chosen randomly across the entire Bloch sphere, $\sigma_{l \neq 0}$ are chosen to lie only around the equator of the Bloch sphere where the decision boundary is.\\
     
     Suppose for these distributions
     $P_{G_+}(f_l(\sigma)=-1=-c(\sigma))=\bar{p}=P_{G_-}(f_l(\sigma)=1=-c(\sigma))$ where $\bar{p}<1/2$. Since $|\xi|$ is very small, we expect $\bar{p}$ to be quite close to $1/2$, which is quite large. This means a large probability of misclassification. For instance, if we used just the hypothesis function $h_l(\sigma)=\text{sgn}f_l(\sigma)$, then the accuracy is 
     \begin{align}
         P(c(\sigma)=h_l(\sigma))=1-\bar{p}.
     \end{align}
     However, suppose we added all of the $l\neq 0$ features together. So each feature on its own is only weakly correlated with $c(\sigma)$, but together they could make a better classifier in the following way. Let 
     \begin{align}
        h(\sigma) \equiv (1/d) \text{sgn}(\sum_{l=1}^d f_l(\sigma)).
     \end{align}
     Now, as $d$ grows, we know that the standard deviation the sum of the $G_{\pm}$ distributions decreases (central limit theorem). In the limit $d \rightarrow \infty$, the \textit{sum} of the $G_{\pm}$ distributions should be delta functions centered on $\pm|\xi|$. This means that the probability the whole sum should be of the opposite sign goes from $\bar{p}$ (for a single $l$) to a probability $\bar{p}'<<\bar{p}$, which drastically increases the accuracy $P(c(\sigma)=h(\sigma)$, even though $h(\sigma)$ consists only of features that are weakly correlated with the true class. \\
     
     These weakly correlated features are also non-robust features, we can perturb the distributions of $\sigma_1 \otimes...\otimes \sigma_d$ slightly so that $G_{\pm} \leftrightarrow G_{mp}$. One possibility is that one can write
     \begin{align}
        & P(h(\sigma')=c(\sigma))=\frac{1}{2}(P_{G_+}(h(\sigma')=1|c(\sigma)=1)+P_{G_-}(h(\sigma')=-1|c(\sigma)=-1)) \nonumber \\
        &=\frac{1}{2}(P_{G_-}(h(\sigma)=1|c(\sigma)=1)+P_{G_+}(h(\sigma)=-1|c(\sigma)=-1)).
     \end{align}
     However, in this form  in the last line it cannot be compared directly to the original standard accuracy
     \begin{align}
         P(h(\sigma)=c(\sigma))=\frac{1}{2}(P_{G_+}(h(\sigma)=1|c(\sigma)=1)+P_{G_-}(h(\sigma)=-1|c(\sigma)=-1)).
     \end{align}
     One method to compare these terms is to break up $P(h=1|c=1)$ terms into $P(h=1|f_0=1)P(f_0=c)+P(h=1|f_0=-1)P(f_0=-c)$, so just like in the classical example above. The proof then follows in exactly the same way. Then it is clear how the trade-off appears in the quantum case, that one can train classifiers to be very accurate, but they will not be robust. \\

\noindent \textbf{Example 12.}
     We can extend to more general setups, so instead of making the $(x_0,x_1,...,x_d) \rightarrow (\sigma_0, \sigma_1,...,\sigma_d)$ correspondence, we can also map the the amplitudes of a $d$-dimensional state $\sigma$, or some other variant. \\
     
     So we adopt the following ansatz, where $h(\sigma)=\text{sgn}H(\sigma)$
     \begin{align}
         H(\sigma)=\sum_{l}w_l f_l(\sigma)=\sum_l w_l \text{Tr}(\Lambda_l \sigma).
     \end{align}
     We want to generalise the notion of what is a robust and what is a non-robust feature. So in the previous example, we saw clearly how projection of $|0\rangle \langle 0|$ onto $\sigma_0$ gives robust features, but projection onto $\sigma_{l \neq 0}$ does not. Of course, the robustness properties are dependent very much on the distributions from which the states $\sigma$ are selected. The key goal is to identify tests on the individual features to test for both their contribution to accuracy and their contribution to the accuracy when a perturbation is present. The details of the analysis will differ depending the on the loss functions chosen.\\
     
     For example, for the $1-0$ loss function, the interesting quantum quantities are 
     \begin{align}
         P(c(\sigma)=w_l f_l(\sigma))
     \end{align}
     and
     \begin{align}
         P(c(\sigma)=w_l f_l(\sigma')).
     \end{align}
     If the former is high and the latter is even lower, then know this $l^{\text{th}}$ feature is a feature that contributes a lot to the standard accuracy and robustness accuracy trade-offs. \\
     
     For other loss functions, it's some other correlation between $c(\sigma)$ and $w_l f_l(\sigma)$. \\
     
     In the end, what we want is a decomposition of $A-\tilde{A}$ in terms of the difference between the accuracies for every individual feature $f_l$ so we can identify the features most responsible for the trade-off and try to eliminate those non-robust features and to inject more robust features. So the procedure is
     \begin{align}
         \text{Choose loss function} \rightarrow A(l)-\tilde{A}_{\delta}(l) \rightarrow A-\tilde{A}_{\delta},
     \end{align}
     where the subscript $\delta$ denotes that the robustness accuracy is defined with respect to the perturbation $\mathcal{N}_{\delta}$. 
     The first arrow depends on the loss function chosen. However, the last arrow may also be difficult because the total accuracy may not be a linear sum of the individual components.\\
     
     However, we can first examine a case where such a linear relationship exists. So when $H(\sigma)=\sum_l H_l(\sigma)$ and $h_l(\sigma)=\text{sgn} H_l(\sigma)$, let the \textit{linear loss function} be written as
     \begin{align}
        \textbf{E}_{(\sigma, c) \sim D} \mathcal{L}(\sigma, c(\sigma), H(\sigma))=\textbf{E}_{(\sigma,c) \sim D} \sum_{l} \mathcal{L}(\sigma, c(\sigma), H_l(\sigma)).
     \end{align}
     In this case 
     \begin{align}
         A-\tilde{A}_{\delta}=\sum_l (A(l)-\tilde{A}_{\delta}(l))
     \end{align}
     and in this way the dominant $l$ components can be identified. These linear loss functions will then be linear functions of $H(\sigma)$, which is in turn linear in the features $f_l(\sigma)$. \\
     
     For these models, we can separate robust and the most non-robust features in the following way. By labelling robust features by $R$ and non-robust features by $\bar{R}$, we can write
     \begin{align}
         &H_{l \in R} (\mathcal{N}_{\delta}(\sigma))=H_{l \in R} (\sigma) \nonumber \\
         &H_{l \in \bar{R}} (\mathcal{N}_{\delta}(\sigma))=-H_{l \in \bar{R}} (\sigma).
     \end{align}
     Of course, the exact relationships are simplified in the above example for an idealised case, but it can be generalised. Then if $\mathcal{L}(\sigma, c(\sigma), H_l(\sigma))=-c(\sigma) H_l(\sigma)$ we have 
     \begin{align}
    &|A-\tilde{A}_{\delta}|=|\textbf{E}_{(\sigma,c)\sim D} c(\sigma) (\sum_l^L-\sum_{l \in R})H_{l}(\sigma)+\sum_{l \in \bar{R}} H_l(\sigma)| \nonumber \\
    &=2|\textbf{E}_{(\sigma,c)\sim D} c(\sigma) \sum_{l \in \bar{R}} H_l(\sigma)|
     \end{align}
     where we used $\sum_l^L=\sum_{l \in R}+\sum_{l \in \bar{R}}$. From this analysis, we see that the trade-off is due entirely to the non-robust features, and if the non-robust features were to disappear, then the trade-offs would disappear. \\
     
     What about \textit{nonlinear} loss functions in the features? For instance, if we have a nonlinear function $g(\cdot)$ as a function of $-c(\sigma)H(\sigma)$, then
     \begin{align}
         & |A-\tilde{A}_{\delta}|= \nonumber \\
         &|\textbf{E}_{(\sigma,c)\sim D}(g(c(\sigma)(\sum_{l \in R}H_l(\sigma)+\sum_{l \in \bar{R}}H_l(\sigma)))-g(c(\sigma)(\sum_{l \in R}H_l(\sigma)-\sum_{l \in \bar{R}}H_l(\sigma))))|
     \end{align}
     So if $g(\cdot)$ is a $K$-Lipschitz continuous function, then 
     \begin{align}
         |A-\tilde{A}_{\delta}|\leq 2K|\textbf{E}_{(\sigma,c)\sim D} c(\sigma)\sum_{l \in \bar{R}} H_l(\sigma)|.
     \end{align}
     This means that even for a very accurate classifier that relies on non-robust features, the amount of trade-off is dependent at most on the non-robust elements only. \\

     The above analysis points out that the key quantities to figure out are the non-robust elements: what they are and how they are related to the distribution $\mathcal{D}$ and the perturbation $\mathcal{N}_{\delta}$. Note that the really problematic features are the ones that are \textit{both non-robust and useful}, where the latter means they contribute to a high $A$. \\
     
     So the key question is: how to figure out which features are both useful and non-robust, hence one needs to get rid of them in one's classifier? Furthermore, once they are eliminated, how to replace them by robust features so that the accuracy is not compromised? \\
     
     Note that in the above inequality, the right-hand side is related to the usefulness of the non-robust features to the standard accuracy. So if those terms are low, then the difference between the accuracy and robustness accuracy must likewise be low. Otherwise, they can be large. \\

      One can also apply spectral analysis to identify better those features that are both useful and robust, for example see techniques in \cite{yin2019fourier} for computer vision. 
  
\section{Connection between robustness in quantum machine learning and dynamical systems}
Robustness in machine learning (both classical and quantum) is not only a 'static' property of a trained model, as has been considered thus far. In fact, closely related concepts to robustness -- numerous forms of stability, for instance -- emerges from the study of dynamical processes that can govern both the training stage and inference stage in continuous-time analog models. The notion of stability comes in many forms and are heavily studied in dynamical systems, including Lyapunov stability, asymptotic stability, stability in optimisation and stability in signal processing and control theory (e.g., bounded input-bounded output BIBO). While stability is chiefly concerned with how a system reacts to small perturbations in initial conditions or
parameters, often focusing on the system's ability to return to an equilibrium point, different notion of robustness refers to good performance in the presence of uncertainty and perturbations.\\

By framing learning as a dynamical system, we can gain access to powerful mathematical tools — from ordinary and partial differential equations, stochastic calculus, optimal control to quantum mechanics, which can both sharpen and broaden our theoretical perspective to unlock new architectures and training paradigms that can perform better in the presence of uncertainty, noise and adversarial perturbations. We just mention a few possibilities here. \\

Gradient-based methods for training can be considered in terms of dynamical systems. Training a machine learning model using gradient descent can be understood as evolving a parameter vector $\theta(t)$ under the influence of a force field derived from the loss function 
$\mathcal{L}({\theta}(t))$ . In the continuous-time limit, this evolution satisfies the gradient flow differential equation
\begin{align}
    \frac{d \theta(t)}{dt}=-\nabla_{\theta}\mathcal{L}(\theta(t))
\end{align}
which is a deterministic dynamical system. The trajectories of this flow dictate how the model evolves toward local minima. Studying the stability of these flows, particularly through the Lyapunov framework, allows us to assess convergence properties and sensitivity to initialization. The robustness of the resulting model is thus linked to the geometry of the loss landscape and the conditioning of the flow — flatter minima often correspond to more robust solutions. \\

Machine learning algorithms can also be built on continuous-time forms of residual neural networks (both classical and quantum), also called neural ordinary and neural partial differential equations (NODEs and NPDEs). For example the model $h$ evolves with continuous depth $t$ as
\begin{align}
    \frac{dh(t)}{dt}=f(h(t), t; \theta). 
\end{align}
 Stability in dynamical systems refers to the behavior of a system under small perturbations to its internal state or initial conditions, and here the stability of the solutions is governed by the Jacobian $J=\partial f/ \partial h$. If all eigenvalues of the Jacobian have non-positive real parts, the trajectories of $h(t)$ are contractive or non-expansive, implying that small perturbations in initial conditions $h(0)$ do not grow uncontrollably with time. This stability ensures that the system behaves predictably and is essential for numerical integration and convergence. \\
 
 In these dynamical systems, the concepts of stability and robustness, while closely related, play distinct roles and are governed by different mathematical principles. Robustness, on the other hand, is concerned with the model’s output sensitivity to perturbations in its input, including noise, distribution shifts, or adversarial attacks. While a stable NODE might suppress the effects of internal perturbations, robustness assesses the model’s functional generalisation—its ability to maintain output performance despite changes or uncertainty in the input state $\sigma$. If in addition $\|f(h_1, t)-f(h_2,t)\| \leq L\|h_1-h_2\|$ $\|f(\mathcal{N}_{\delta}(\sigma))-f(\sigma)\|$, the system is Lipschitz continuous, and the solution is stable under bounded perturbations. This property translates to robustness against adversarial perturbations, since the model resists sharp changes in output due to small input noise. \\

In summary, while stability is a mathematical property of the internal learning dynamics, and often a prerequisite for training convergence and numerical solvability, robustness relates to the reliability of a model’s predictions under external disturbances. The two are connected: stability in Neural ODEs and PDEs often enhances robustness, but the reverse is not guaranteed. A model can be stable but still vulnerable to specific structured input attacks. Understanding both are critical in designing machine learning systems that are dependable in real-world, noisy, or adversarial settings.\\

On a related note, robustness can also be framed through feedback control, where training adapts to perturbations using corrective signals. Consider the controlled learning dynamics
\begin{align}
    \frac{d \theta(t)}{dt}=-\nabla_{\theta} \mathcal{L}(\theta(t))+K(\sigma, \delta)
\end{align}
where $K$ is a feedback term based on perturbation $\mathcal{N}_{\delta}$ with 'size' $\delta$ - however one chooses to define the size according to a specific metric. By designing 
$K$ using robust control techniques (e.g., H-infinity control), one can stabilise the learning process even under worst-case data corruptions. This control-theoretic lens provides a new way to consider for adversarial robustness and other types of robustness from the viewpoint of dynamical systems, for both classical and quantum learning.\\

The language of stability, more familiar in the study of ODEs and PDEs, can be relevant in other ways to notions of robustness of both classical and quantum machine learning. For example, adversarial training solves a minimax optimization problem
\begin{align}
    \min_{\theta} \mathbf{E}_{(\sigma,c)} \max_{\delta} \mathcal{L}(\mathcal{N}_{\delta}(\sigma), c, \theta).
\end{align}
This can be interpreted through the Hamilton-Jacobi-Bellman (HJB) equation, which governs optimal control under adversarial dynamics. In this formulation, the adversary acts as a controller that perturbs the input, and training seeks a policy (parameter $\theta$) that minimises the worst-case loss. Viewing adversarial training through this lens facilitates PDE-based approximations that lead to robust classifier architectures. Here the differences between classical and quantum cases would lie in different measures of distance, different structure of the loss function, different classes of 'natural' input states....etc. \\

There are also machine learning algorithms that are designed for the learning of different dynamical systems, e.g. physically-informed neural networks (PINNs), and in these cases it's clear that there is a potential rich interplay of the various notions of stability and robustness for these applications. 
\section{Discussion and open questions}
From the theorems and and examples we've explored in this chapter, we can see that it is very important to be careful of the definitions that are used, for example robustness, accuracy and robustness accuracy. Even from simple examples, these suggest some preliminary conclusions:
\begin{itemize}
    \item Accuracy and robustness accuracy trade-offs can occur due to relevant perturbations. 
    \item Accuracy and robustness accuracy trade-offs can also occur due to irrelevant perturbations, for example, when the model has low accuracy (below a certain threshold) as well as a high model robustness. Alternatively, it can also occur when the model has high enough accuracy (above a threshold), but the model has low model robustness. This can be due to the hypothesis class for the model not allowing for the symmetry corresponding to the irrelevant perturbation. Here a higher accuracy gives a lower robust accuracy. This in turn shows that robustness accuracy can be improved with higher model robustness against irrelevant perturbations if accuracy is already high enough.
    \item It is possible to find sufficient conditions where accuracy and robusntess accuracy trade-offs do not occur at all.
    \item When training states include the the same symmetry corresponding to the irrelevant perturbations, the trade-off condition can be shown to improve and even disappear altogether under fairly general conditions.
     \item To distinguish the effect of  relevant and irrelevant perturbations, one can more effectively improve the models by  explicitly including known symmetries.
    \item  It is possible that when a model is trained under \textit{incompatible noise}, better accuracy under training with one type of perturbation can induce worse accuracy when training with another type of perturbation. This is particularly important when requiring robustness of quantum models, to explore especially when one trains models under quantum noise
    \item The robustness accuracy and accuracy trade-off relationship can inform a constructive form of the no free lunch theorem, which can help suggest better inductive biases for the distributions under which the model works well
\end{itemize}
We see that even simple examples in this chapter already suggests many open questions, for example  
\begin{enumerate}
\item	\textbf{General Trade-off Characterization:} Can a complete taxonomy of trade-off types be formulated, especially in higher-dimensional quantum systems with non-trivial entanglement and noise correlations? What are the chief differences between classical and quantum examples? 
\item \textbf{Incompatible noise:} Can a taxonomy be developed more systematically to identify incompatible quantum noise, and what are the implications for the training of quantum models? 
\item \textbf{No free lunch theorem:} How can better understanding of trade-off relationships between different forms of robustness and accuracy provide a more constructive form of the no free lunch theorem, to suggest better inductive biases? What are the differences between classical and quantum cases? 
\item	\textbf{Robust Training Strategies:} What are the optimal training algorithms and cost functions that prioritize robustness without excessively compromising accuracy? Can quantum-specific regularization strategies be developed? 
\item	\textbf{Robustness in Regression and Unsupervised Learning:} This chapter focused on classification problems. Can these definitions and trade-off principles be extended rigorously to regression tasks or quantum unsupervised learning frameworks?
\item	\textbf{Adversarial Quantum Examples:} Is it always possible to construct quantum adversarial examples efficiently where no trade-offs between accuracy and robustness accuracy exists, and are there practical defenses against them in near-term quantum hardware?
\item	\textbf{Model Selection and Hypothesis Class Constraints:} Are there universal bounds on robustness accuracy for specific classes of quantum models?
\item	\textbf{Connections to Dynamical Systems:} We noted parallels with dynamical stability and control theory. Can ideas from Lyapunov stability or Hamiltonian systems yield new insights into robustness in quantum learning, in a more system way?
\end{enumerate}

\section*{Acknowledgements}
This book chapter (long overdue) is dedicated to the memory of a good colleague and friend Peter Wittek, with whom the author wrote the first quantum paper studying the robustness of quantum classification algorithms against adversarial perturbations.  \\

 NL acknowledges funding from the Science and Technology Commission of Shanghai Municipality (STCSM) grant no.24LZ1401200 (21JC1402900) and the Shanghai Pilot Program for Basic Research. NL is also supported by NSFC grants No.12471411 and No. 12341104, the Shanghai Jiao Tong University 2030 Initiative, the Shanghai Science and Technology Innovation Action Plan (24LZ1401200) and the Fundamental Research Funds for the Central Universities.  

\section*{Appendix A: Proof of Theorem 1} \label{app:A}
Let $C_{\sigma} $ be the class label output of our noiseless model for input state $\sigma$.  However,  this model does not necessarily coincide with the `ground truth' for $\sigma$,  which we present by $T_{\sigma}$.  Thus the model is correct for $\sigma$ if $T_{\sigma}=C_{\sigma}$ and is incorrect for $\sigma$ if $C_{\sigma} \neq T_{\sigma}$.  For both our training and our test states,  let us sample from the states $\sigma$ from some (usually unknown) distribution $\mathcal{D}$.  Then by \textit{accuracy in the absence of noise} we mean the probability that $T_{\sigma}=C_{\sigma}$ when $\sigma \sim \mathcal{D}$,  which can be denoted 
\begin{align}
A \equiv P_{\sigma \sim \mathcal{D}}(C_{\sigma}=T_{\sigma}) \equiv P(C=T)
\end{align} 
where we have dropped the subscripts for convenience.  Now let us include noise in our model so $C_{\sigma} \rightarrow \tilde{C}_{\sigma}$. Then the \textit{accuracy in the presence of noise} can denoted by the probability
\begin{align}
\tilde{A} \equiv P_{\sigma \sim \mathcal{D}}(\tilde{C}_{\sigma}=T_{\sigma}) \equiv P(\tilde{C}=T)
\end{align}
There is \textit{another} type of accuracy which we can call \textit{robustness accuracy},  which refers to the probability that the model itself gives rise to the same prediction after adding noise,  irrespective of the relationship to the ground truth.  This robustness accuracy we can denote by 
\begin{align}
A^*=P_{\sigma \sim \mathcal{D}}(\tilde{C}_{\sigma}=C_{\sigma}) \equiv P(\tilde{C}=C).
\end{align}
Usually it is $A$ and $\tilde{A}$ that is of interest for generalisation performance,  so by relating $\tilde{A}$ with $A$,  we can determine by what amount accuracy degrades in the presence of noise.  However,  we will see that this relationship will also depend on $A^*$ and it is through $A^*$ that we can include information about the type of noise that is added.  \\

Let there be $K$ classes,  so each of $C, \tilde{C}, T$ can take values $0, 1,...,K-1$.  So now we can rewrite $\tilde{A}$ as 
\begin{align} \label{eq:tildeAjj}
& \tilde{A} \equiv P(\tilde{C}=T)=\sum_{j=0}^{K-1}  P(\tilde{C}=T=j)=\sum_j P(\tilde{C}=j|T=j)P(T=j)=\frac{1}{K} \sum_j P(\tilde{C}=j|T=j) \nonumber \\
&=\frac{1}{K}\left(\sum_{j} P(\tilde{C}=j|C=j)P(C=T)+\sum_j\sum_{k \neq j} P(\tilde{C}=j|C=k\neq j)P(C \neq T)\right) \nonumber \\
&=\frac{1}{K}\left(A\sum_{j} P_{jj}+(1-A) \sum_j \sum_{k \neq j} P_{jk} \right) 
\end{align}
where $P_{ij} \equiv P(\tilde{C}=i|C=j)$.  We also used the generic assumption in the first line that the samples we have are `unbiased', in the sense that there are as many states of one class as the other determined by the ground truth,  so $P(T=j)=1/K$ for any $j$.  \\

From normalisation of probability,  we can write $P_{kk}=1-\sum_{j \neq k} P_{jk}$,  which allows us to rewrite $\sum_j \sum_{k\neq j}P_{jk}= \sum_{k} (\sum_{j \neq k}  P_{jk})=\sum_k (1-P_{kk})=K-\sum_j P_{jj}$.  Now inserting this into Eq.~\eqref{eq:tildeAjj} we find
\begin{align}
&\tilde{A}=\frac{1}{K}(A \sum_j P_{jj}+(1-A)(K-\sum_j P_{jj})=\frac{1}{K}(1-2A)\sum_j P_{jj}+(1-A)
\end{align}
Now using the generic assumption in the second line that the samples we have are `unbiased' also with respect to the mode,  so $P(C=j)=1/K$,  we see that 
\begin{align}
\sum_j P_{jj}=\sum_j\frac{P(\tilde{C}=C=j)}{P(C=j)}=KP(\tilde{C}=C)=KA^*.
\end{align}
Therefore,  we can write the general relation for \textit{any} noise type as 
\begin{align} \label{eq:AtildeA}
\tilde{A}=A^*(2A-1)+(1-A).
\end{align}
So now we see that the source of the decrease in accuracy $\tilde{A}$ due to noise itself is due \textit{only} to $A^*=P(\tilde{C}=C)$.  The rest of the accuracy dependence is on the accuracy in the noiseless case $A$ which is independent of noise.  \\

It is through the behaviour of $A^*$ that makes depolarisation noise quite special,  since in the \textit{infinite sampling} limit $N \rightarrow \infty$,  then $\tilde{C}_{\sigma}=C_{\sigma}$ for all states $\sigma$,  which gives $A^*=1$.  This was the result already proved in Lemma 1, in the manuscript  with details in Appendix A.  Therefore,  if we allow $N \rightarrow \infty$ sampling of the quantum circuit,  inserting $A^*=1$ into Eq.~\eqref{eq:AtildeA} gives
\begin{align}
\tilde{A}=(2A-1)+(1-A)=A,
\end{align}
so the accuracy in the absence of noise is the same as the accuracy in the presence of depolarisation noise! Note that this is special to depolarisation noise having the property $\tilde{C}=C$ and is not true for other noises.  However,  there are also other classes of noises that can help the model remain robust.  Some other robustness properties have been investigated in [44] for other types of noises and their impact on accuracy can be found through Eq.~\eqref{eq:AtildeA}. \\

However,  in the \textit{finite sampling} limit,  $A^* \neq 1$ for depolarisation noise.  We already proved in Lemma  2 of the manuscript that $A^*>1-2\exp(-2N \zeta^2)$ where $N$ is the number of samples we take from the quantum circuit and $\zeta$ is the precision to which we determine the quantity $\mathbf{\tilde{y}}_0(\sigma)$.  Therefore,  we see that in the finite sampling limit,  the accuracy in the presence of noise degrades as 
\begin{align}
\tilde{A}>(1-2\exp(-2N \zeta^2))(2A-1)+(1-A)
\end{align}
where $\tilde{A} \rightarrow A$ exponentially quickly as $N$ grows,  so accuracy in the presence of depolarisation noise is not in fact compromised very much. Thus,  all effects of depolarisation noise on accuracy can be remedied by taking more measurements (efficiently). 

\section*{Appendix B: Proof of Theorem 2} \label{app:B}
The argument runs in the same way as for Theorem 1. Here we look at $K$-class classification where the data is unbiased, but the model is biased. However, we look at the simplest setting where only one class is biased. Suppose it is class $A$ that is biased, so then $P(h=a)=\alpha$, and the rest of the classes are unbiased, so $P(h=i \neq a)=(1-\alpha)/(K-1) \equiv \alpha'$. In this case, we just need the formula for $\sum_j P_{jj}=P(\tilde{h}=1|h=1)+...+P(\tilde{h}=K|h=K)$, since in the multiclass case we have derived:
\begin{align}
    \tilde{A}=\frac{1}{K}(1-2A)\sum_j P_{jj}+(1-A). 
\end{align}
So we begin with
\begin{align}
    & A^*=P(\tilde{h}=h=1)+...+P(\tilde{h}=h=K) \nonumber \\
    &=\alpha P(\tilde{h}=a|h=a)+\alpha'\sum_{i \neq A}P(\tilde{h}=i|h=i) \nonumber \\
    &=\alpha \sum_j P_{jj}+(\alpha'-\alpha)(\sum_j P_{jj}-P(\tilde{h}=a|h=a)) \nonumber \\
    &=\alpha' \sum_j P_{jj}-(\alpha'/\alpha-1)P(\tilde{h}=h=a).
\end{align}
We can rearrange to find
\begin{align}
    \sum_{j} P_{jj}=\frac{1}{\alpha'}(A^*+(\alpha'/\alpha-1)P(\tilde{h}=h=a))
\end{align}
where $\alpha'=(1-\alpha)/(K-1)$. Finally we have the relation
\begin{align}
    \tilde{A}=\frac{(1-2A)(K-1)}{(1-\alpha)K}\left(A^*+\left(\frac{1-\alpha}{\alpha (K-1)}-1\right)P(\tilde{h}=h=a)\right)+(1-A),
\end{align}
which reduces to the Theorem 1 expression when $\alpha'=\alpha=1/K$.
\section*{Appendix C: Details of examples}
\subsection*{Example 1} 
We now have a noise model where we have the perturbation $x_2,...,x_{d+1} \rightarrow x'_2,...,x'_{d+1}$, where if  $(x_2,...,x_{d+1}) \sim D_{\pm}$ then $(x'_2,...,x'_{d+1}) \sim D_{\mp}$. \\

Under this noise model, we have the transformation
\begin{align}
    & \tilde{H}_1(x)=\frac{1}{d}(x'_2+...+x'_{d+1}) \nonumber \\
    & \tilde{H}_2(x)=H_2(x), 
\end{align}
where we see that $\tilde{H}_2=H_2$ but $\tilde{H}_{1} \neq H_{1}$. We can easily see that $A_2=p$ and $A^*_2=1$. Assuming that $P(c=1)=1/2=P(c=-1)$, $A_1$ can be computed using
\begin{align}
    A_1=\frac{1}{2}(P(h=1|c=1)+P(h=-1|c=-1)).
\end{align}
This accuracy can be increased to arbitrarily high values. For example, if $D_{\pm}$ are Gaussian distributions centered on $\pm \eta c$ for some $\eta>0$ with standard deviation $1$ and for large enough $d$ for the central limit theorem to apply, then $A_1=P(h=c)$ corresponds to the probability that the value of a Gaussian distribution centered on $\eta$ with standard deviation $1/d$ has a value larger than $0$. This means that so long as $\eta$ is larger than $1/\sqrt{d}$, we can have a high $A_1$.  We also find that
\begin{align}
   & A^*_1=P_{x_2,...x_{d+1}\sim D_+}(h_1(x)=\tilde{h}_1(x)) \nonumber \\
   &+P_{x_2,...x_{d+1}\sim D_-}(h_1(x)=\tilde{h}_1(x))=0
\end{align}
since we can imagine a perturbation where $x_{i\geq 2} \rightarrow -x_{i \geq 2}$ to give $D_{\pm} \rightarrow D_{\mp}$. This means that $h_1(x)=-\tilde{h}_1(x)$ for all $x$ selected from the distribution $D_+ \bigcup D_-$. To compute the robustness accuracy, we note that the noise perturbation makes the following change in distribution $D_{\pm} \rightarrow D_{\mp}$, which means that for the Gaussian distribution example we have $\tilde{A}_1=1-A_1$, which from Eq.~\eqref{eq:lemma1} is also consistent with $A^*_1=0$.\\

In the special case of the Gaussian distributions, we have
\begin{align}
    & A_1=1-\text{Erf}(\eta \sqrt{d}), A^*_1=0, \tilde{A}_1=1-A_1 \nonumber \\
    & A_2=p, A^*_2=1, \tilde{A}_2=A_2.
\end{align}
This means that we can have the trade-off scenario where $A_1>>A_2$ and $\tilde{A}_1=1-A_1<<1-A_2<A_2=\tilde{A}_2$ when $A_2=p>0.5$. Therefore, in the scenario $p>0.5$, we have a trade-off condition where a better model in accuracy $H_2 \rightarrow H_1$ can lead to a less robust model. 
\subsection*{Example 2} 
This means we have the following relationships between probabilities:
\begin{align}
    & p_{++} \equiv P_{\{x_1, x_2,...,x_{d+1}\}\sim D_+}  (f=1|g=1) \nonumber \\
    &=P_{\{x_1, x_2,...,x_{d+1}\}\sim D_+} (f=1|g=-1)\equiv p_{-+} \nonumber \\
    & p_{+-} \equiv P_{\{x_1, x_2,...,x_{d+1}\}\sim D_-}  (f=1|g=1)\nonumber \\
    &=P_{\{x_1, x_2,...,x_{d+1}\}\sim D_-} (f=1|g=-1) \equiv p_{--} \nonumber \\
    & p'_{++} \equiv P_{\{x_1, x_2,...,x_{d+1}\}\sim D_+}  (g=1|g=1) \nonumber \\
    &=P_{\{x_1, x_2,...,x_{d+1}\}\sim D_-} (g=1|g=1)\equiv p'_{+-}=1 \nonumber \\
     & p'_{-+} \equiv P_{\{x_1, x_2,...,x_{d+1}\}\sim D_+}  (g=1|g=-1)\nonumber \\
     &=P_{\{x_1, x_2,...,x_{d+1}\}\sim D_-} (g=1|g=-1)\equiv p'_{--}=0.
\end{align}
Then it is possible to compute
\begin{align}
    & A_1=1-\frac{1}{2}pa-\frac{1}{2}(1-p)b \nonumber \\
    & \tilde{A}_1=\frac{1}{2}(1-p)a+\frac{1}{2}pb,
\end{align}
where $a \equiv 1-p_{++}+p_{--}$ and $b \equiv 1-p_{-+}+p_{+-}$. From this we can prove that if $A_1>A_2=p$, then it is always true that $\tilde{A}_1<\tilde{A}_2$ if $p>1/2$. We see that $A_1>p$ implies $1/2pa<(1-p)(1-1/2b)$. $p>1/2$ is equivalent to $1-p<p$, which implies $1/2pa<p(1-1/2b)$. Inserting this inequality into $\tilde{A}_1=1/2(1-p)a+1/2pb<1/2pa+1/2pb<p(1-1/2b)+1/2pb=p=\tilde{A}_2$ gives our result. 
\subsection*{Example 3} 
If the distributions $D_{\pm}$ are the same Gaussian distributions in Example 1, then the argument proceeds in the same way, so that if $1-\text{Erf}(\eta \sqrt{d})>p>0.5$, then we have a trade-off situation that $A_1>A_2$ gives $\tilde{A}_1<\tilde{A}_2$. \\

This means that if we are training a model 
\begin{align}
     H(\sigma)=\sum_{i=1}^{d+1} \text{Tr}(h \sigma_i)=\alpha_1 H_1(\sigma)+\alpha_2 H_2(\sigma)
\end{align}
where we begin with $\alpha_1=0, \alpha_2=1$ and after more training we get $\alpha_0=1, \alpha_2=0$.
\subsection*{Example 4} 
Similarly to the classical scenario, we can make the following definitions for $j=1,2$
\begin{align}
     & p^{(j)}_{++} \equiv P_{\{f_2,...,f_{d+1}\}\sim D_+}  (H_j=1|H_1=1) \nonumber \\
    & p^{(j)}_{+-} \equiv P_{\{f_2,...,f_{d+1}\}\sim D_-}  (H_j=1|H_1=1)\nonumber \\
    & p^{(j)}_{-+} \equiv P_{\{f_2,...,f_{d+1}\}\sim D_+}  (H_j=1|H_1=-1) \nonumber \\
     & p^{(j)}_{--} \equiv P_{\{f_2,...,f_{d+1}\}\sim D_-}  (H_j=1|H_1=-1)
\end{align}
We then define the accuracy $A_1$ when $P(c=1)=1/2=P(c=-1)$ to be 
\begin{align}
    A_1=P(h_1=c) \equiv p.
\end{align}
The accuracy for the second model is then
\begin{align} 
A_2=1-\frac{1}{2}pa_2-\frac{1}{2}(1-p)b_2
\end{align}
where $a_2=1-p^{(2)}_{++}+p^{(2)}_{--}$ and $b_2=1-p^{(2)}_{-+}+p^{(2)}_{+-}$. Suppose we choose perturbations where only $f_2,...,f_{d+1}$ can change and leave $f_1$ invariant. Furthermore, the distributions change according to $D_{\pm} \rightarrow D_{\mp}$. \\

Suppose the perturbation leaves $H_1$ intact, so then 
\begin{align}
    \tilde{A}_1=A_1=p,
\end{align}
but model $H_2$ has a robustness accuracy different to its accuracy
\begin{align}
    \tilde{A}_2=\frac{1}{2}(1-p)a_2+\frac{1}{2}pb_2
\end{align}
Then a similar trade-off argument follows where if $p>1/2$, then if we train values of $\alpha_i$ beginning from $\alpha_1=1, \alpha_{i \neq 0}=0$ to more general $\{\alpha_i \}$ we can achieve
\begin{align}
    A_2>A_1
\end{align}
but this would lead to 
\begin{align}
    \tilde{A}_2<\tilde{A}_1.
\end{align} 
\subsection*{Example 7} 
A more general class of examples of trade-offs between $\tilde{A}_s$ and $A_s$ can come about if we consider small perturbations in the distribution $\mathcal{D}$ itself. Let the initial distribution be $\mathcal{D}_1$ and the distribution after perturbation be $\mathcal{D}_2$. We can define these distributions in the following way.\\

For example, suppose we have an observable $\mathcal{M}_j$ and we define $F_j(\sigma_i)=\text{Sign}\text{Tr}(\mathcal{M}_j\sigma_i)$ as the sign of the $j^{th}$ \textit{quantum feature} of the state $\sigma_i$. Then let us define the sign of one quantum feature $F_0$ such that 
\begin{align}
    \nonumber P(F_0(\sigma^{\pm}_i)=\pm 1|c(\sigma_i)=\pm 1)=f_0 \\
    \nonumber P(F_0(\sigma^{\mp}_i)=\pm 1|c(\sigma_i)=\mp 1)=1-f_0.
\end{align}
So if $f_0>1/2$, then the feature $F_0$ is positively correlated with the true label $c$. With this definition, we can write 
\begin{align}
    &\sum_{\sigma^+_i} P_s(h_s(\sigma^+_i)=1)=\sum_{\sigma_i}P_s(h_s(\sigma_i)|c_i) \nonumber \\
    &=\sum_{\sigma_i}(P_s(h_s(\sigma_i)=1|f_0(\sigma_i)=1)\mathbf{1}(f_0(\sigma_i)=c_i)\nonumber \\
    &+P_s(h_s(\sigma_i)=1|f_0(\sigma_i)=-1)\mathbf{1}(f_0(\sigma_i)=-c_i)).
\end{align}
This can be included in the definition of the accuracy where we define $\sigma^{f_0^{\pm}}=\{\sigma| f_0(\sigma)=\pm 1\}$.
\begin{align}
    &A_s=\sum_{\sigma_i}P(\sigma_i, c_i=1) ((\mathbf{1}(f_0(\sigma_i)=c_i)\text{Tr}(\Pi_1 \sigma^{f_0^+}_i)+\mathbf{1}(f_0(\sigma_i)=-c_i)\text{Tr}(\Pi_1\sigma^{f_0^{-}}_i))) \nonumber \\
    &+\sum_{\sigma_i}P(\sigma_i, c_i=-1) ((\mathbf{1}(f_0(\sigma_i)=c_i)\text{Tr}(\Pi_{-1} \sigma^{f_0^-}_i) +\mathbf{1}(f_0(\sigma_i)=-c_i)\text{Tr}(\Pi_{-1}\sigma^{f_0^{-}}_i))) \nonumber \\
    &=\sum_{\sigma_i^+}\text{Tr}(\Pi_1 \sigma^{f_0^+}_i)P(\sigma^+_i, f_0(\sigma_i)=c_i=1) +\sum_{\sigma_i^+}\text{Tr}(\Pi_1 \sigma^{f_0^-}_i)P(\sigma^+_i, f_0(\sigma_i)=-c_i=-1) \nonumber \\
    &+\sum_{\sigma_i^-}\text{Tr}(\Pi_{-1} \sigma^{f_0^-}_i)P(\sigma^-_i, f_0(\sigma_i)=c_i=-1) +\sum_{\sigma_i^-}\text{Tr}(\Pi_{-1} \sigma^{f_0^+}_i)P(\sigma^-_i, f_0(\sigma_i)=-c_i=1).
\end{align}
We observe that
\begin{align}
   & P(\sigma^{\pm}_i,f_0(\sigma_i)=\pm 1)=P(f_0(\sigma_i)=\pm 1|\sigma^{\pm}_i)=f_0 P(\sigma^{\pm}_i) \nonumber \\
   & P(\sigma^{\pm}_i,f_0(\sigma_i)=\mp 1)=P(f_0(\sigma_i)=\mp 1|\sigma^{\pm}_i)=(1-f_0) P(\sigma^{\pm}_i).
\end{align}
Inserting this into the above expression for $A_s$ gives
\begin{align}
    &A_s=\sum_{\sigma_i^+}\text{Tr}(\Pi_1 \sigma^{f_0^+}_i)P(\sigma^+_i)f_0+\sum_{\sigma_i^+}\text{Tr}(\Pi_1 \sigma^{f_0^-}_i)P(\sigma^+_i)(1-f_0) \nonumber \\
    &+\sum_{\sigma_i^-}(1-\text{Tr}(\Pi_{1} \sigma^{f_0^-}_i))P(\sigma^-_i)f_0+ \sum_{\sigma_i^-}(1-\text{Tr}(\Pi_{1} \sigma^{f_0^+}_i))P(\sigma^-_i)(1-f_0).
\end{align}
We can define the normalised states
\begin{align}
    & \Sigma^{++}=\frac{\sum_{\sigma_i^+} P(\sigma_i^+)\sigma^{f_0^+}_i}{\sum_{\sigma_j^+} P(\sigma_j^+)} \nonumber \\
    &\Sigma^{+-}=\frac{\sum_{\sigma_i^+} P(\sigma_i^+)\sigma^{f_0^-}_i}{\sum_{\sigma_j^+} P(\sigma_j^+)} \nonumber \\
    &\Sigma^{-+}=\frac{\sum_{\sigma_i^-} P(\sigma_i^-)\sigma^{f_0^+}_i}{\sum_{\sigma_j^-} P(\sigma_j^-)} \nonumber \\
     &\Sigma^{--}=\frac{\sum_{\sigma_i^-} P(\sigma_i^-)\sigma^{f_0^-}_i}{\sum_{\sigma_j^-} P(\sigma_j^-)} \nonumber \\
\end{align}
If we use the unbiased dataset assumption, then we can simplify
\begin{align}
  &  A_s=\frac{1}{2}f_0 \text{Tr}(\Pi_1 \Sigma^{++})+\frac{1}{2}(1-f_0)\text{Tr}(\Pi_1\Sigma^{+-}) \nonumber \\
&+\frac{1}{2}f_0(1-\text{Tr}(\Pi_1 \Sigma^{--}))+\frac{1}{2}(1-f_0)(1-\text{Tr}(\Pi_1 \Sigma^{-+})).
\end{align}
Then a perturbed classifier leads to 
\begin{align}
    & \tilde{A}_s=\frac{1}{2}f_0 \text{Tr}(\Pi_1 \tilde{\Sigma}^{++})+\frac{1}{2}(1-f_0)\text{Tr}(\Pi_1\tilde{\Sigma}^{+-}) \nonumber \\
&+\frac{1}{2}f_0(1-\text{Tr}(\Pi_1 \tilde{\Sigma}^{--}))+\frac{1}{2}(1-f_0)(1-\text{Tr}(\Pi_1 \tilde{\Sigma}^{-+})).
\end{align}
Now we study a class of perturbations such that 
\begin{align}
    & \tilde{\Sigma}^{++}=\Sigma^{+-} \nonumber \\
    & \tilde{\Sigma}^{+-}=\Sigma^{++} \nonumber \\
    & \tilde{\Sigma}^{--}=\Sigma^{-+} \nonumber \\
    &\tilde{\Sigma}^{-+}=\Sigma^{--}.
\end{align}
Then it is straightforward to show that for any $f_0>1/2$ we have the trade-off condition
\begin{align}
    \tilde{A}_s \leq \frac{f_0}{1-f_0}(1-A_s).
\end{align} 
\section*{Appendix D: Proof of Lemma 1}
Below we will distinguish between two types of probabilities. $P_s$ is the probability associated with the stochastic variable getting a specific outcome given some $\sigma$, which we can model with a quantum expectation value as $P_s(h_s(\sigma)=j)=\text{Tr}(\Pi_j \sigma)$. On the other hand $P$ is the probability associated with averaging over some distribution of input states, for instance $P(c(\sigma)=1)$ is the probability of states sampled from $\mathcal{D}$ as having the true label $1$. We can define then accuracy $A_s$ of this stochastic classifier $h_s$ as the probability it gives the same outcome as the true label
\begin{align}
    A_s \equiv \sum_{\sigma_i}\sum_{c_i}P(\sigma_i, c_i)P_s(h_s(\sigma_i)=c_i)
\end{align}
Using the definition $\sigma^{\pm}=\{\sigma| c(\sigma)=\pm 1\}$ we can then rewrite $A_s$ 
\begin{align}
    &A_s= \sum_{\sigma_i}\sum_{c_i}P(\sigma_i, c_i)P_s(h_s(\sigma_i)=c_i) \nonumber \\
    &=\sum_{\sigma_i^+}P(\sigma^+_i, c_i=1)P_s(h_s(\sigma^+_i)=1) +\sum_{\sigma_i^-}P(\sigma^-_i, c_i=-1)P_s(h_s(\sigma^-_i)=-1) \nonumber \\
    &=\sum_{\sigma_i^+}P(\sigma^+_i, c_i=1) \text{Tr}(\Pi_1 \sigma_i^+) +\sum_{\sigma_i^-}P(\sigma^-_i, c_i=-1) \text{Tr}(\Pi_{-1} \sigma_i^-) \nonumber \\
    &=\text{Tr}(\Pi_1\sum_{\sigma_i^+}P(\sigma^+_i)\sigma_i^+) +\text{Tr}(\Pi_{-1}\sum_{\sigma_i^-}P(\sigma^-_i)\sigma_i^-) \nonumber \\
    &=(\sum_{\sigma_i^+}P(\sigma^+_i))\text{Tr}(\Pi_1 \Sigma^+)+(\sum_{\sigma_i^-}P(\sigma^-_i))\text{Tr}(\Pi_{-1}\Sigma^-) \nonumber \\
    &
\end{align}
where in the third equality we used $P(\sigma^{\pm}_i, c_i=\pm 1)=P(c_i=-\pm 1|\sigma^{\pm}_i)P(\sigma^{\pm}_i)=P(\sigma^{\pm}_i)$ and we define the normalised states 
\begin{align}
     \Sigma^{\pm} \equiv \frac{\sum_{\sigma_i^{\pm}}P(\sigma^{\pm}_i)\sigma_i^{\pm}}{\sum_{\sigma_j^{\pm}}P(\sigma^{\pm}_j)}.
\end{align}
For a unbiased dataset we have $\sum_{\sigma_j^{\pm}}P(\sigma^{\pm}_j)=1/2$. 
Then using the relation $\Pi_1=\mathbf{1}-\Pi_{-1}$ we can write 
\begin{align}
    A_s=\frac{1}{2}\text{Tr}(\Pi_1(\Sigma^+-\Sigma^-))+\frac{1}{2}.
\end{align}
We will work using this unbiased assumption unless otherwise stated. Suppose we modify only the stochastic classifier, so we change the measurement $\Pi \rightarrow \tilde{\Pi}=\sum_l E_l \Pi E_l^{\dagger}$ where $\{E_l\}$ are Kraus operators. Then the robustness accuracy under this perturbation becomes 
\begin{align}
   &\tilde{A}_s=\frac{1}{2}\text{Tr}(\tilde{\Pi}_1(\Sigma^+-\Sigma^-))+\frac{1}{2} \nonumber \\
   &=\frac{1}{2}\text{Tr}(\Pi_1(\tilde{\Sigma}^+-\tilde{\Sigma}^-))+\frac{1}{2}
\end{align}
where $\tilde{\Sigma}^{\pm}=\sum_l E^{\dagger}_l \Sigma^{\pm}E_l$. \\

Then we can derive the difference
\begin{align} \label{eq:asdifference1}
    |\tilde{A}_s-A_s|=\frac{1}{2}|\text{Tr}(\Pi_1(\tilde{\Sigma}^+-\Sigma^++\Sigma^--\tilde{\Sigma}^-)|.
\end{align}
We can rearrange Eq.~\eqref{eq:asdifference1} to get 
\begin{align}
    & |\tilde{A}_s-A_s|=\frac{1}{2}|\text{Tr}(\Pi_1(\tilde{\Sigma}^++\Sigma^--(\Sigma^++\tilde{\Sigma}^-)| \nonumber \\
   & =|\text{Tr}(\Pi_1 (\Sigma_A-\Sigma_B))| \leq \tau(\Sigma_A, \Sigma_B)
\end{align}
where $\Sigma_A \equiv (1/2)(\tilde{\Sigma}^++\Sigma^-)$ and $\Sigma_B \equiv (1/2)(\Sigma^++\tilde{\Sigma}^-)$. As a very quick check, when the noise is the identity operation, we see that we see that the upper bound is saturated, since $\tau(\Sigma_A, \Sigma_B=\Sigma_A)=0$.

\bibliographystyle{IEEEtran}
\bibliography{Ref}

\begin{thebibliography}{1}
\providecommand{\url}[1]{#1}
\csname url@samestyle\endcsname
\providecommand{\newblock}{\relax}
\providecommand{\bibinfo}[2]{#2}
\providecommand{\BIBentrySTDinterwordspacing}{\spaceskip=0pt\relax}
\providecommand{\BIBentryALTinterwordstretchfactor}{4}
\providecommand{\BIBentryALTinterwordspacing}{\spaceskip=\fontdimen2\font plus
\BIBentryALTinterwordstretchfactor\fontdimen3\font minus \fontdimen4\font\relax}
\providecommand{\BIBforeignlanguage}[2]{{%
\expandafter\ifx\csname l@#1\endcsname\relax
\typeout{** WARNING: IEEEtran.bst: No hyphenation pattern has been}%
\typeout{** loaded for the language `#1'. Using the pattern for}%
\typeout{** the default language instead.}%
\else
\language=\csname l@#1\endcsname
\fi
#2}}
\providecommand{\BIBdecl}{\relax}
\BIBdecl

\bibitem{dobriban2023provable}
E.~Dobriban, H.~Hassani, D.~Hong, and A.~Robey, ``Provable tradeoffs in adversarially robust classification,'' \emph{IEEE Transactions on Information Theory}, vol.~69, no.~12, pp. 7793--7822, 2023.

\bibitem{tsipras2018robustness}
D.~Tsipras, S.~Santurkar, L.~Engstrom, A.~Turner, and A.~Madry, ``Robustness may be at odds with accuracy,'' \emph{arXiv preprint arXiv:1805.12152}, 2018.

\bibitem{xu2012robustness}
H.~Xu and S.~Mannor, ``Robustness and generalization,'' \emph{Machine learning}, vol.~86, pp. 391--423, 2012.

\bibitem{yang2020closer}
Y.-Y. Yang, C.~Rashtchian, H.~Zhang, R.~R. Salakhutdinov, and K.~Chaudhuri, ``A closer look at accuracy vs. robustness,'' \emph{Advances in neural information processing systems}, vol.~33, pp. 8588--8601, 2020.

\bibitem{zhang2019theoretically}
H.~Zhang, Y.~Yu, J.~Jiao, E.~Xing, L.~El~Ghaoui, and M.~Jordan, ``Theoretically principled trade-off between robustness and accuracy,'' in \emph{International conference on machine learning}.\hskip 1em plus 0.5em minus 0.4em\relax PMLR, 2019, pp. 7472--7482.

\bibitem{yin2019fourier}
D.~Yin, R.~Gontijo~Lopes, J.~Shlens, E.~D. Cubuk, and J.~Gilmer, ``A fourier perspective on model robustness in computer vision,'' \emph{Advances in Neural Information Processing Systems}, vol.~32, 2019.

\end{thebibliography}

\end{document}